\documentclass[11pt]{article}

%\documentclass[prodmode,acmec]{ec-acmsmall}
%\acmVolume{X}
%\acmNumber{X}
%\acmArticle{X}
%\acmYear{2013}
%\acmMonth{2}
\usepackage[numbers]{natbib}

\usepackage[noend]{algorithmic,algorithm}

\usepackage{amsmath}
\usepackage{amsthm}
\usepackage{amssymb}
\usepackage{float}
\usepackage[margin=1in]{geometry}

%\addtolength{\oddsidemargin}{-.8in}
%\addtolength{\evensidemargin}{-.8in}
%\addtolength{\textwidth}{1.6in}
%
%\addtolength{\topmargin}{-.9in}
%\addtolength{\textheight}{1.6in}

%%%%%%%%%%%%%%%%%%%% New Commands %%%%%%%%%%%%%%%%%%%%

\newtheorem{theorem}{Theorem}[section]

\newtheorem{lemma}[theorem]{Lemma}
\newtheorem{claim}[theorem]{Claim}

\newtheorem{observation}[theorem]{Observation}

\newenvironment{discussion}{\par{\noindent \bf Discussion:}}{\qed \par}
\newenvironment{proofof}[1]{\begin{proof}[Proof of #1]}{\end{proof}}

\floatstyle{ruled}
\newfloat{algorithm}{tbp}{loa}
\floatname{algorithm}{Algorithm}

%%\newtheorem{theorem}{Theorem}[section]
%%\newtheorem{corollary}[theorem]{Corollary}
%%\newtheorem{lemma}[theorem]{Lemma}
%%\newtheorem{claim}[theorem]{Claim}
%\newtheorem{fact}[theorem]{Fact}
%%\newtheorem{proposition}[theorem]{Proposition}
%%\newtheorem{conjecture}{Conjecture}
%%\newtheorem{property}{Property}
%\newtheorem{observation}[theorem]{Observation}
%%\newtheorem{remark}{Remark}
%%\newtheorem{definition}{Definition}
%
%\newenvironment{discussion}{\par{\noindent \bf Discussion:}}{\qed \par}
%\newenvironment{proofof}[1]{\begin{proof}[of #1]}{\end{proof}}
%%\newenvironment{proofsketch}{\begin{proof}[Proof Sketch]}{\end{proof}}
%%\newcommand{\qedsymbol}{\ensuremath{\square}}
%
%\floatstyle{ruled}
%\newfloat{algorithm}{tbp}{loa}
%\floatname{algorithm}{Algorithm}

% TECHNICAL REPORT FLAG
\newif\iftechnicalreport
\technicalreportfalse
\technicalreporttrue

\newcommand{\comment}[1]{{ }}

\newcommand{\be}{\begin{equation}}
\newcommand{\ee}{\end{equation}}

\newcommand{\eps}{\epsilon}
\newcommand{\argmax}{\mathop{\rm argmax}}

\newcommand{\ignore}[1]{}
\newcommand{\OPT}{\textup{OPT}}
\newcommand{\AVG}{\textup{AVG}}

\def \E   {{\mathbb E}}

\def \calh {\mathcal{H}}
\def \calg {\mathcal{G}}

\begin{document}

%\title{Influence Maximization in Social Networks: \\Towards an Optimal Algorithmic Solution}
\title{Maximizing Social Influence in Nearly Optimal Time}

\author{
Christian Borgs\thanks{Microsoft Research New England.}
\and
Michael Brautbar\thanks{Computer and Information Science, University of Pennsylvania. Now at Toast, Inc.}
\and
Jennifer Chayes\thanks{Microsoft Research New England.}
\and
Brendan Lucier\thanks{Microsoft Research New England.}
}

\date{}

\maketitle

\begin{abstract}
Diffusion is a fundamental graph process, underpinning such phenomena as epidemic disease contagion and the spread of innovation by word-of-mouth.
We address the algorithmic problem of finding a set of $k$ initial seed nodes in a network so that the expected size of the resulting cascade is maximized, under the standard \emph{independent cascade} model of network diffusion.
Runtime is a primary consideration for this problem due to the massive size of the relevant input networks.
%It is known that a constant approximation to this problem can be achieved in polynomial time, but the runtime 
%requirements of existing algorithms are often impractical due to the massive size of the relevant networks.
%The promise of such an algorithm lies in applications to viral marketing, but runtime is of critical importance in this endeavor due to the massive size and volatility of the relevant networks.

%Our first result is a novel 
We provide a fast algorithm for the influence maximization problem, obtaining the near-optimal approximation factor of $(1 - \frac{1}{e} - \epsilon)$, for any $\epsilon > 0$, in time $O((m+n)k\epsilon^{-2}\log n)$.
% where $n$ and $m$ are the number of vertices and edges in the network.  
Our algorithm is runtime-optimal (up to a logarithmic factor) with respect to network size, and substantially improves upon the previously best-known algorithms which run in time $\Omega(mnk \cdot \text{POLY}(\epsilon^{-1}))$.  Furthermore, our algorithm can be modified to allow early termination: if it is terminated after $O(\beta(m+n)k\log n)$ steps for some $\beta < 1$ (which can depend on $n$), then it returns a solution with approximation factor $O(\beta)$.  Finally, we show that this runtime is optimal (up to logarithmic factors) for any $\beta$ and fixed seed size $k$.
%Importantly, our algorithm is essentially runtime-optimal (up to a logarithmic factor) as we establish a lower bound of $\Omega(m+n)$ on the runtime required to obtain a constant approximation.

%Our second result is a novel algorithm that allows a provable tradeoff between solution quality and runtime. Our algorithm is the first to provide such a tradeoff for the influence maximization problem, giving an $\Theta(\frac{1}{\beta})$-approximation in time $O(n \cdot a(\calg) \log^3(n)/\beta)$ for any $\beta > 1$, where $a(\calg)$ denotes the arboricity of the diffusion network $\calg$. In particular, for graphs with bounded arboricity (as is the case for many models of network formation and empirically observed social networks) our algorithm is nearly runtime-optimal (up to logarithmic factors) for any fixed seed size $k$.

%Our approach is based on a novel preprocessing scheme that generates a sparse hypergraph representation of the underlying network via sampling. 
%We show that this representation makes it possible to efficiently estimate marginal influence in the original diffusion process with very few samples, and that the quality of this estimation degrades gracefully with reduced preprocessing time.

\end{abstract}
%\maketitle

\setcounter{page}{0}

\newpage

\section{Introduction}

Diffusion is a fundamental process in the study of complex networks, modeling the spread of disease, ideas, or product adoption through a population.  
The common feature in each case is that local interactions between individuals can lead to epidemic outcomes.  This is the idea behind word-of-mouth advertising, in which information about a product travels via links between individuals \cite{Rogers03,BrownReingen87,GoldenbergLM01,CentolaMacy07,BakshyKA09,ChaMG09,Gomez-RodriguezLK12}.
%In recent years, as online social network structure has become increasingly visible, 
%applications 
%%of diffusion models on networks 
%to advertising have become increasingly relevant. 
A prominent application is a viral marketing campaign which aims to use a small number of targeted interventions to initiate cascades of influence that create a global increase in product adoption \cite{DomingosRichardson01,KempeKT03,LeskovecAH07,GoldenbergLM01}.

%A large part of this interest focuses on the algorithmic problem of inferring potential influencers from network topology.  
This application gives rise to an algorithmic problem: given a network, how can we determine which individuals should be targeted
to maximize the magnitude of a resulting cascade \cite{DomingosRichardson01,RichardsonDomingos02,KempeKT03}? Supposing that there is a limit $k$ on the number of nodes to target (e.g.\ due to advertising budgets), the goal is to efficiently find an appropriate set of $k$ nodes with which to ``seed'' a diffusion process.  This problem has been studied for various models of influence spread, leading to the development of polynomial-time algorithms that achieve constant approximations \cite{KempeKT03, KempeKT05,MosselRoch07}.

Relevant networks for this problem can have massive size, on the order of billions of edges.  
%For such networks, on current hardware, even a quadratic runtime bound does not suffice to guarantee practicality.  
The running time of an influence maximization algorithm is therefore a primary consideration.  
This is compounded by the fact that social network data and influence parameters tend to be volatile, necessitating recomputation of solutions over time.  For these reasons, near-linear runtime is a practical necessity for algorithms that work with massive network data.
This stringent runtime requirement has spawned a large body of work aimed at developing fast, heuristic methods of 
finding influential individuals in social networks, despite the existence of the above-mentioned approximation algorithms \cite{ChenWZ09,ChenWW10,ChenCWZ10,LeskovecKGFVG07,WangCSX10,MathioudakisBCGU11,JiangSCWSX11,KimuraSaito06}.  However, to date, this line of work has focused primarily on empirical methods.  Currently, the fastest algorithms with constant-factor approximation guarantees have runtime $\Omega(nmk)$ \cite{ChenWZ09}.

In this paper we bridge this gap by developing a constant-factor approximation algorithm for the influence maximization problem, under the standard \emph{independent cascade} model of influence spread, that runs in quasilinear time.   
Our algorithm can also be modified to run in sublinear time, with a correspondingly reduced approximation factor.
Before describing these results in detail, we first provide some background into the influence model.

\paragraph{The Model: Independent Cascades}
We adopt the \emph{independent cascade} (IC) model of diffusion, formalized by Kempe et al. \cite{KempeKT03}.  In this model we are given a directed edge-weighted graph $\calg$ with $n$ nodes and $m$ edges, representing the underlying network.  Influence spreads via a random process that begins at a set $S$ of seed nodes.  
Each node, once infected, has a chance of subsequently infecting its neighbors: the weight of edge $e = (v,u)$ represents the probability that the process spreads along edge $e$ from $v$ to $u$.
If we write $I(S)$ for the (random) number of nodes that are eventually infected by this process, then we think of the expectation of $I(S)$ as the \emph{influence} of set $S$.  Our optimization problem, then, is to find set $S$ maximizing $\E[I(S)]$ subject to $|S| \leq k$.
%, given an input network represented in adjacency list format.

The IC model captures the intuition that influence can spread stochastically through a network, much like a disease \cite{KempeKT03,GoldenbergLM01,DoddsWatts04}.  Since its introduction, it has become one of the prominent models of influence spread; see for example \cite{ChenWZ09,ChenWW10,ChenCWZ10,LeskovecKGFVG07,WangCSX10,KempeKT05}. 
%One of the reasons is that it captures the intuition that influence can spread stochastically through a network, much like a disease \cite{KempeKT03,GoldenbergLM01,DoddsWatts04}.  
%Also, as noted by Kempe et al.\cite{KempeKT03}, IC is equivalent to a \emph{uniform linear threshold} model in which each node becomes infected after a certain weighted fraction of its neighbors are infected, given that these thresholds are drawn uniformly from the unit interval.  This thresholding behavior is also a common feature in prominent classic models of influence spread \cite{Granovetter78,Morris00}.  
%An important feature of the IC model is its computational tractability.
Kempe et al.\ show that $\E[I(\cdot)]$ is a submodular monotone function \cite{KempeKT03}, and hence the problem of maximizing $\E[I(\cdot)]$ can be approximated to within a factor of $(1-\frac{1}{e} -\epsilon)$ for any $\epsilon > 0$, in polynomial time, via a greedy hill-climbing method.
In contrast, many other formulations of the influence maximization problem have been shown to have strong lower bounds on polynomial-time approximability
\cite{Peleg02,Morris00,Chen08,Ben-ZewiHLN11}.%,GoyalKearns12}.

The greedy approach to maximizing influence in the IC model described above takes time $O(kn)$, \emph{given oracle access to the function $\E[I(\cdot)]$}.  However, influence values must be computed from the underlying network topology, by (for example) repeated simulation of the diffusion process.
%
%The standard approach is to 
%estimate the influence function via repeated simulation of the random diffusion process, 
This leads %many times to estimate influence values for each node.
%This 
ultimately\footnote{After simple optimizations, such as reusing simulations for multiple nodes.} to a runtime\footnote{The best implementations appear to have running time $O(mnk \log(n) \cdot \text{POLY}(\epsilon^{-1}))$ \cite{ChenWZ09}, though to the best of our knowledge a formal analysis of this runtime has not appeared in the literature.}
of $\Omega(mnk \cdot \text{POLY}(\epsilon^{-1}))$.   %, 

\paragraph{Our Result: A Quasi-Linear Time Algorithm}
Our main result is an algorithm for finding $(1-\tfrac{1}{e}-\epsilon)$-approximately optimal seed sets in arbitrary directed networks, which runs in time $O((m+n)k \epsilon^{-2}\log n)$.  
%Notably, the runtime of our algorithm is independent of the number of seeds $k$.  Moreover, 
This runtime is close to optimal with respect to network size, as we give a lower bound of $\Omega(m+n)$ on the time required to obtain a constant approximation, assuming an adjacency list representation of the network as well as the the ability to uniformly sample nodes.  We also note that this approximation factor is nearly optimal, as no polytime algorithm achieves approximation $(1-\frac{1}{e}+\epsilon)$ for any $\epsilon > 0$ unless $\text{P} = \text{NP}$ \cite{KempeKT03, KempeKT05}.
Our algorithm is randomized, and it succeeds with probability $3/5$; moreover, failure is detectable, so this success probability can be amplified through repetition.  

We assume that the network topology is described in the \emph{sparse} representation of an (arbitrarily ordered) adjacency list for each vertex, as is natural for sparse graphs such as social networks. %We are interested in developing efficient, and highly-local algorithms for the problem. 
Our algorithms access the network structure in a very limited way: the only queries %to be 
used by our algorithms are uniform vertex sampling and traversing the edges incident to a previously-accessed vertex. %In particular, our algorithm accesses the network structure in a very limited way and is directly implementable (with no additional runtime cost) in a wide array of graph access models, including the ones used for sublinear-time and property testing algorithms on sparse graphs \cite{RubSha12,Goldreich10c} and the jump and crawl paradigm of \cite{BrautbarKearns10}.

To describe our approach, let us first consider the problem of finding the single node with highest influence.  One strategy would be to estimate the influence of every node directly, e.g., via repeated simulation, but this is computationally expensive.
%, then return whichever has the highest estimate.  This requires that we generate $n$ estimates, which is computationally expensive.  
Alternatively, consider the following ``polling'' process: select a node $v$ uniformly at random, and determine the set of nodes that \emph{would have influenced $v$}.  
%This set of ``influencers'' can be found by reversing the direction of each edge in the network and simulating the influence process on the resulting transpose graph.
Intuitively, if we repeat this process multiple times, and a certain node $u$ appears often as an ``influencer,'' then $u$ is likely a good candidate for the most influential node.  Indeed, we show that the probability a node $u$ appears in a set of influencers is proportional to $\E[I(u)]$, and standard concentration bounds show that this probability can be estimated accurately with relatively few repetitions of the polling process.  Moreover, it is possible to efficiently find the set of nodes that would have influenced a node $v$: this can be done by simulating the influence process, starting from $v$, in the \emph{transpose graph} (i.e., the original network with edge directions reversed).

This motivates our algorithm, which proceeds in two steps.  First, we repeatedly apply the random sampling technique described above to generate a sparse hypergraph representation of the network.
Each hypergraph edge corresponds to a set of individuals that was influenced by a randomly selected node in the transpose graph.
This preprocessing is done once, resulting in a structure of size $O((m+n)k\epsilon^{-2}\log(n))$.  This hypergraph encodes our influence estimates: for a set of nodes $S$, the total degree of $S$ in the hypergraph is approximately proportional to the influence of $S$ in the original graph.
In the second step, we run a standard greedy algorithm on this hypergraph to return a set of size $k$ of approximately maximal total degree.

To make this approach work one needs to overcome several inherent difficulties. 
First, note that our sampling method allows us to estimate the influence of a node, but not the marginal benefit of adding a node to a partially constructed seed set.  Thus, unlike prior algorithms, we do not repeat our estimation procedure to incrementally construct a solution.  Instead, we perform all of our sampling up front and then select the entire seed set using the resulting hypergraph.
%we show that the marginal influence of a node $v$ is proportional to the probability that $v$ is influenced by a randomly chosen node $u$ in the transpose graph.  These probabilities can be estimated more efficiently than influence itself.  In particular, this transpose-graph formulation simplifies the process of estimating \emph{marginal} influence, so that we need not repeat the estimation procedure when considering different partial solutions. This results in substantial savings in runtime.

Second, our algorithm has a stringent runtime constraint --- we must construct our hypergraph in time $O((m+n)k \epsilon^{-2}\log(n))$.
To meet this bound, we must be flexible in the number of hyperedges we construct.  Instead of building a certain fixed number of hyperedges, we repeatedly build edges until the \emph{total sum of all edge sizes} exceeds $O((m+n)k \epsilon^{-2}\log(n))$.  
Intuitively speaking, this works because the number of hyperedges needed to accurately estimate influence values, times the expected size of each hyperedge, is roughly constant.  Indeed, we should expect to see large hyperedges only if the network contains many influential nodes, but high influence values require fewer samples to estimate accurately.  %In other words, 
%
%The motivation behind this choice is that 
%the number of samples required to accurately estimate the influence function is (roughly) inversely proportional to the expected size of a hyperedge.
%We formalize this argument and show that this number of steps suffices to approximate the influence of each set of nodes in the graph.  This approximation comes, for each set, as a probabilistic guarantee with confidence $1-1/{\text{POLY}(n)}$.  

Finally, in order to prevent errors from accumulating when we apply the greedy algorithm to the hypergraph, it is important that our estimator for the influence function (i.e.\ total hypergraph degree) is itself a monotone submodular function.
%As a final step, our algorithm uses a sublinear greedy approximation on the hypergraph nodes to approximate the influence maximization problem on the original graph; for this, it is important that our estimator for expected influence is itself a monotone submodular function.
%We then show that the polynomial probabilistic guarantee we got suffices for successfully returning a set with influence at least $(1-1/e-\epsilon)$ of optimal.

%Our algorithm accesses the network structure in a very limited way, and is therefore implementable in a wide array of graph access models.  The only operations used by our algorithm are accessing a random vertex and traversing the edges incident to a previously-accessed vertex.  In particular, our algorithm falls within the jump and crawl paradigm of \cite{BrautbarKearns10}.

\paragraph{Early Termination and Sublinear Time}
%We extend our approximation algorithm to allow a provable tradeoff between runtime and approximation quality.  
We next show how to modify our approximation algorithm to allow early termination, providing a tradeoff between runtime and approximation quality.  Specifically, if the algorithm is allowed to run for $O(\beta (n+m)k \log(n))$ steps, and is then terminated without warning, it can immediately return a solution with approximation factor $O(\beta)$.
%Given a 
%%network $\calg$ with arboricity\footnote{The arboricity of a network is the minimum number of spanning forests needed to cover all edges \cite{NW61}.} $a(\calg)$ and a 
%parameter $\beta \in (0,1]$, our algorithm attains approximation $O(\beta)$ in time $O(\beta (n+m) \log(n))$.
%%To the best of our knowledge, ours is the first algorithm with such a tradeoff between runtime and approximation quality.
%
%In particular, on networks of bounded
%%\footnote{Actually, this only requires that $m = O(n a(\calg))$, which can also hold for dense graphs with high arboricity.} 
%arboricity, our algorithm finds a node of approximately maximal influence in sublinear time when $\beta = \omega(log^3(n))$. 
%We note that many rich classes of graphs have bounded arboricity, including planar graphs, graphs with small maximum in-degree or maximum out-degree, other graphs with bounded treewidth (as the treewidth is at most twice the arboricity), many models of network formation and empirically observed social and technological networks \cite{GoelGustedt06, LeskovecHorvitz08, CarmiHKSS07}. 
%
We also provide a lower bound of $\Omega(\max\{\beta^2, \tfrac{\beta}{k}\} \cdot (m+n) )$ on the runtime needed to obtain an $O(\beta)$-approximation.  Our algorithm is therefore nearly runtime-optimal (up to logarithmic factors) for any fixed seed size $k < \beta^{-1}$.
Our method is randomized, and it succeeds with probability $3/5$.  
As before, these results assume that the input network is provided in adjacency list format and an algorithm is allowed to perform uniform sampling of the nodes. Notably, our algorithm accesses the network structure in a very limited way and is directly implementable (with no additional cost) in a wide array of graph access models, including the ones used for sublinear-time and property testing algorithms on sparse graphs \cite{RubSha12,Goldreich10c} and the jump and crawl paradigm of \cite{BrautbarKearns10}.

The intuition behind our modified algorithm is that a tradeoff between execution time and approximation factor can be achieved by constructing fewer edges in our hypergraph representation.  Given an upper bound on runtime, we can build edges until that time has expired, then run the influence maximization algorithm using the resulting (impoverished) hypergraph.  We show that this approach generates a solution whose quality degrades gracefully with the preprocessing time, with an important caveat.  
If the network contains many nodes with high influence, it may be that a reduction in 
runtime prevents us from achieving enough concentration to estimate the influence of any node.
%If so, the highest-degree node(s) in the constructed hypergraph will not necessarily be the nodes of highest influence in the original graph.  
%For example, in a sparse vertex-transitive graph in which all nodes have the same influence, any sublinear algorithm will touch each vertex very few times in expectation.
However, in this case, the fact that many individuals have high influence enables an alternative approach: a node chosen at random, according to the degree distribution of nodes in the hypergraph representation, is likely to have high influence.

Given the above, our algorithm will proceed by constructing two possible seed sets: one using the greedy algorithm applied to the constructed hypergraph, and the other by randomly selecting a singleton according to the hypergraph degree distribution.  If $k > 1$ we will return a union of these two solutions.  When $k=1$ we cannot use both solutions, so we must choose; 
in this case, it turns out that we can determine which of the two solutions will achieve the desired approximation by examining the maximum degree in the hypergraph.
%take some additional care in deciding which of the two solutions to return.
%the relative simplicity of the case $k=1$ allows us to determine which of the two will achieve the desired approximation.
%; it turns out that we should return the greedy outcome precisely when the maximum degree of the hypergraph is sufficiently large.
%, but this turns out to be straightforward: if the maximum degree of the hypergraph is sufficiently large, then our estimates are sufficiently concentrated that we can apply the greedy approach; otherwise, we should return the singleton selected at random.
%To decide between the two, we design a procedure for efficienctly estimating the influence of a given set, up to a maximum of $n / \beta$.  
% of between the minimum of its true value and $n/\beta$ to at most $log(n)$ times that.
%One can then compare directly the influence of the set $S$ computed based on the hypergraph representation to the influence of a randomly chosen node, which was chosen according to the degree distribution of nodes in the hypergraph.
%We then return the set with higher tested influence.

Finally, to allow early termination without warning, the algorithm can pause its hypergraph construction and compute a tentative solution at predetermined intervals (e.g., repeatedly doubling the number of steps between computations).  Then, upon a request to terminate, the algorithm returns the most recent solution.

\subsection{Related Work}

Models of influence spread in networks, covering both cascade and threshold phenomena, are well-studied in the sociology and marketing literature \cite{Granovetter78,Rogers03,GoldenbergLM01}.  The problem of finding the most influential set of nodes to target for a diffusive process was first posed by Domingos and Richardson \cite{DomingosRichardson01,RichardsonDomingos02}.  A formal development of the IC model, along with a greedy algorithm based upon submodular maximization, was given by Kempe et al.  \cite{KempeKT03}.  Many subsequent works have studied the nature of diffusion in online social networks, using empirical data to estimate influence probabilities and infer network topology; see \cite{Liben-NowellKleniberg08,Gomez-RodriguezLK12,LeskovecMFGH07}.

It has been shown that many alternative formulations of the influence maximization problem are computationally difficult.  The problem of finding, in a \textit{linear threshold} model, a set of minimal size that influences the entire network was shown to be inapproximable within $O(n^{1-\epsilon})$ by Chen \cite{Chen08}. The problem of determining influence spread given a seed set in the IC model is \#P-hard \cite{ChenWW10}.  

%Due to this apparent computational difficulty for the influence-estimation problem, 
%designing highly efficient algorithms for the maximum influence problem have 
There has been a line of work aimed at improving the runtime of the algorithm by Kempe et al. \cite{KempeKT03}.  These have focused largely on heuristics, such as assuming that all nodes have relatively low influence or that the input graph is clustered \cite{ChenWW10,ChenCWZ10,KimuraSaito06, WangCSX10}, as well as empirically-motivated implementation improvements \cite{LeskovecKGFVG07,ChenWZ09}.  %There is empirical evidence that these methods improve performance for real-world social networks, despite quadratic time complexity in the worst case.
One particular approach of note involves first attempting to sparsify the input graph, then estimating influence on the reduced network  \cite{ChenCWZ10,MathioudakisBCGU11}.  Unfortunately, these sparsification problems are shown to be computationally intractible in general.

%The relationship between influence diffusion and submodularity is a recurring theme, and 
Various alternative formulations of influence spread as a submodular process have been proposed and analyzed in the literature \cite{MosselRoch07,KempeKT05}, including those that include interations between multiple diffusive processes \cite{GoyalKearns12,BharathiKS07}. % Another common extension is to include interactions between multiple different diffusive processes, though this often results in a loss of submodularity \cite{GoyalKearns12,BharathiKS07}.
We focus specifically on the IC model, and leave open the question of whether our methods can be extended to apply to these alternative models.

The influence estimation problem shares some commonality with the problems of local graph partitioning, as well as estimating pagerank and personalized pagerank vectors \cite{AndersenBCHMT2007,BorgsBCT12, SpielmanTeng04, AndersenCL06}.  These problems admit local algorithms based on sampling short random walks.  To the best of our understanding, these methods do not seem directly applicable to influence maximization. %due to the inherently non-local nature of influence cascades.  
%Such algorithms converge to good estimates quickly due to the local nature of pagerank vectors.  Repeatedly simulating short random walks has been proven successful for constructing local algorithms for graph partitioning \cite{SpielmanTeng04,AndersenCL06}.

%Our algorithms access the network structure in a very limited way: the only operations used are probing a random node in the network, and traversing an edge from a previously-queried node to one of its neighbors.  In this sense, our algorithm falls within the jump and crawl paradigm of \cite{BrautbarKearns10}.
%, who showed how to find nodes of approximately maximal degree in sublinear time with similar operations.

%\subsection{Outline}
%In section \ref{sec.prelim} we formally introduce the model and relevant notation. In section \ref{section:linear}
%we introduce a nearly-linear time approximation algorithm. In section \ref{section:sublinear} we develop an algorithm that allow for a tradeoff of runtime and approximation guarantee. Last, in section \ref{sec:coverage}
%we present an improved approximation algorithm for the case the network is undirected and deterministic (i.e.\ unweighted).  

\section{Model and Preliminaries}\label{sec.prelim}

\paragraph{The Independent Cascade Model}
In the independent cascade (IC) model, influence spreads via an edge-weighted directed graph $\calg$. An infection begins at a set $S$ of seed nodes, and spreads through the network in rounds.  Each infected node $v$ has a single chance, upon first becoming infected, of subsequently infecting his neighbors. Each directed edge $e = (v,u)$ has a weight $p_e \in [0,1]$ representing the probability that the process spreads along edge $e$ to node $u$ in the round following the round in which $v$ was first infected. 

As noted in \cite{KempeKT03}, the above process has the following equivalent description.  We can interpret $\calg$ as a distribution over unweighted directed graphs, where each edge $e$ is independently realized with probability $p_e$.  If we realize a graph $g$ according to this probability distribution, then we can associate the set of infected nodes in the original process with the set of nodes reachable from seed set $S$ in $g$.  We will make use of this alternative formulation of the IC model throughout the paper.

\paragraph{Notation}
We let $m$ and $n$ denote the number of edges and nodes, respectively, in the weighted directed graph $\calg$.  We write $g \sim \calg$ to mean that $g$ is drawn from the random graph distribution $\calg$.  Given set $S$ of vertices and (unweighted) directed graph $g$, write $C_g(S)$ for the set of nodes reachable from $S$ in $g$.  When $g$ is drawn from $\calg$, we will refer to this as the set of nodes influenced by $S$.  We write $I_g(S) = |C_g(S)|$ for the number of nodes influenced by $S$, which we call the influence of $S$ in $g$.  We write $\E_{\calg}[I(S)] = \E_{g \sim \calg}[I_g(S)]$ for the expected influence of $S$ in $\calg$.

Given two sets of nodes $S$ and $W$, we write $C_g(S|W)$ for the set of nodes reachable from $S$ but not from $W$.  That is, $C_g(S|W) = C_g(S) \setminus C_g(W)$.  As before, we write $I_g(S|W) = |C_g(S|W)|$; we refer to this as the marginal influence of $S$ given $W$.  The expected marginal influence of $S$ given $W$ is $\E_{\calg}[I(S|W)] = \E_{g \sim \calg}[I_g(S|W)]$.

In general, a vertex in the subscript of an expectation or probability denotes the vertex being selected uniformly at random from the set of vertices of $\calg$.  For example, $\E_{v,\calg}[I(v)]$ is the average, over all graph nodes $v$, of the expected influence of $v$.

For a given graph $g$, define $g^{T}$ to be the \emph{transpose graph} of $g$: $(u,v) \in g$ iff $(v,u) \in g^{T}$.  We apply this notation to both unweighted and weighted graphs.

\paragraph{The Influence Maximization Problem}
Given graph $\calg$ and integer $k \geq 1$, the influence maximization problem is to find a set $S$ of at most $k$ nodes maximizing the value of $\E_\calg[I(S)]$.  Write $\OPT = \max_{S: |S| = k}\{\E_{\calg}[I(S)]\}$ for the maximum expected influence of any set of $k$ nodes.  For $\beta \leq 1$, we say that a particular set of nodes $S$ with $|S| \leq k$ is a $\beta$-approximation to the influence maximization problem if 
%$\E_\calg[I(S)] \geq \beta \cdot \max_{T : |T| = k} \E_\calg[I(S)]$.
$\E_\calg[I(S)] \geq \beta \cdot \OPT$.
We assume that graph $\calg$ is provided in adjacency list format, with the neighbors of a given vertex $v$ ordered arbitrarily.

\paragraph{A Simulation Primitive}
Our algorithms we will make use of a primitive that realizes an instance of the nodes influenced by a given vertex $u$ in weighted graph $\calg$, and returns this set of nodes.  Conceptually, this is done by realizing some $g \sim \calg$ and traversing $C_g(u)$.  

Let us briefly discuss the implementation of such a primitive.  Given node $u$, we can run a depth first search in $\calg$ starting at node $u$.  Before traversing any given edge $e$, we perform a random test: with probability $p_e$ we traverse the edge as normal, and with probability $1 - p_e$ we do not traverse edge $e$ and ignore it from that point onward.  The set of nodes traversed in this manner is equivalent to $C_g(u)$ for $g \sim \calg$, due to deferred randomness.  We then return the set of nodes traversed.  The runtime of this procedure is precisely the sum of the degrees (in $\calg$) of the vertices in $C_g(u)$. 

We can implement this procedure for a traversal of $g^T$, rather than $g$, by following in-links rather than out-links in our tree traversal.
%Also, we will sometimes wish to run this procedure with an upper bound on the number of nodes to traverse; in this case we simply abort the depth-first traversal when the bound has been reached and return the set of nodes explored up to that point.

%\input{SublinearInfluence-highlevelalgo.tex}
\section{An Approximation Algorithm for Influence Maximization}
\label{section:linear}

In this section we present an algorithm for the influence maximization problem on arbitrary directed graphs.  Our algorithm returns a $(1-\frac{1}{e}-\epsilon)$-approximation to the influence maximization problem, with success probability $3/5$, in time $O((m+n)k\epsilon^{-2}\log n)$.  
%We note that this algorithm is a simplification of a more general version that permits a tradeoff between runtime and approximation, which appears in Section \ref{section:sublinear}.  
We discuss how to amplify this success probability in Section \ref{sec:amplify.success}.
%We defer a discussion of success probability amplification, as well as some proofs, to Appendix \ref{sec:appendixlinear}.

The algorithm is described formally as Algorithm \ref{approx-exp-infl}, but let us begin by describing our construction informally.  Our approach proceeds in two steps. The first step, BuildHypergraph, stochastically generates a sparse, undirected hypergraph representation $\calh$ of our underlying graph $g$.  This is done by repeatedly simulating the influence spread process on the transpose of the input graph, $g^T$.  This simulation process is performed as described in Section \ref{sec.prelim}: we begin at a random node $u$ and proceed via depth-first search, where each encountered edge $e$ is traversed independently with probability $p_e$.  The set of nodes encountered becomes an edge in $\calh$.  We then repeat this process, generating multiple hyperedges.  The BuildHypergraph subroutine takes as input a bound $R$ on its runtime; we continue building edges until a total of $R$ steps has been taken by the simulation process.  As discussed in Section \ref{sec.prelim}, the number of steps taken by the process is equal to the number of edges considered by the depth-first search processes. 
%, i.e.\ the number of times we decide whether or not to traverse an edge).  
Once $R$ steps have been taken in total over all simulations, we return the resulting hypergraph.  

In the second step, BuildSeedSet, we use our hypergraph representation to construct our output set.  This is done by repeatedly choosing the node with highest degree in $\calh$, then removing that node and all incident edges from $\calh$.  
The resulting set of $k$ nodes is the generated seed set.

\begin{algorithm}
\caption{Maximize Influence} \label{approx-exp-infl}
\begin{algorithmic}[1]
%\REQUIRE Precision parameter $\epsilon \in (0,1)$, directed edge-weighted graph $\calg$, runtime bound $R$.
\REQUIRE Directed edge-weighted graph $\calg$, runtime bound $R$.
%	\STATE $R \leftarrow 144(m+n)\epsilon^{-3}\log(n)$
	\STATE $\calh \leftarrow$ BuildHypergraph$(R)$
	\RETURN BuildSeedSet$(\calh, k)$
\end{algorithmic}
\vspace{2mm}
\textbf{BuildHypergraph$(R)$:}
\begin{algorithmic}[1]
	\STATE Initialize $\calh = (V, \emptyset)$.
	\REPEAT
		\STATE Choose node $u$ from $\calg$ uniformly at random.
		\STATE Simulate influence spread, starting from $u$, in $\calg^T$.  Let $Z$ be the set of nodes discovered.		
		\STATE Add $Z$ to the edge set of $\calh$.
	\UNTIL{$R$ steps have been taken in total by the simulation process.}
	\RETURN $\calh$
\end{algorithmic}
\vspace{2mm}
\textbf{BuildSeedSet$(\calh, k)$:}
\begin{algorithmic}[1]
	\FOR {$i = 1, \dotsc, k$}
		\STATE $v_i \leftarrow \argmax_v \{ deg_{\calh}(v) \}$
		\STATE Remove $v_i$ and all incident edges from $\calh$
	\ENDFOR
	\RETURN $\{ v_1, \dotsc, v_k \}$ 
\end{algorithmic}

\end{algorithm}

\comment{
\begin{algorithm}
\caption{BuildHypergraph}
\begin{algorithmic}[1]
\REQUIRE Runtime parameter $r$.
	\STATE Initialize $\calh = (V, \emptyset)$.
	\REPEAT
		\STATE Choose node $u$ from $\calg$ uniformly at random.
		\STATE Simulate influence spread in transpose graph ${\cal}^T$ starting at $u$ by realizing a network
		$g \sim \calg$ and growing a spanning tree from $u$ in $g^{T}$.  Let $T$ be set of nodes influenced.
		\STATE Add $T$ to the edge set of $\calh$
	\UNTIL{a total of $120nk\epsilon^{-2}\log(n)$ steps have been taken by the simulation process.}
	\RETURN $\calh$
\end{algorithmic}
\end{algorithm}

\begin{algorithm}
\caption{BuildSeedSet}
\begin{algorithmic}[1]
\REQUIRE Hypegraph $\calh$, set size $k$.
	\FOR {$i = 1, \dotsc, k$}
		\STATE $v_i \leftarrow \argmax_v \{ deg_{\calh}(v) \}$
		\STATE Remove $v_i$ and all incident edges from $\calh$
	\ENDFOR
	\RETURN $\{ v_1, \dotsc, v_k \}$ 
\end{algorithmic}
\end{algorithm}
}

%While Algorithm \ref{approx-exp-infl} is relatively simple to describe, it is not obvious at all why such an algorithm should work well under its imposed, stringent time constraints.
We now turn to provide a detailed analysis of Algorithm \ref{approx-exp-infl}.  Fix $k$ and a weighted directed graph $\calg$.  For notational convenience we will assume $m \geq n$, by adding edges of weight $0$ if necessary.  
%Let $\OPT = \max_{S: |S| = k}\{\E_{\calg}[I(S)]\}$, the maximum expected influence of any set of $k$ nodes.  
%Let $\AVG = \E_{v,\calg}[I(v)]$, the expected influence of a randomly selected vertex.

%Our goal is to bound the approximation factor of the set returned by Algorithm \ref{approx-exp-infl}.

\begin{theorem}
\label{thm.alg1.correctness}
There exists a constant $c$ such that, for any $\epsilon \in (0, \tfrac{1}{2})$, if we set $R = cmk\epsilon^{-2}\log(n)$ then Algorithm \ref{approx-exp-infl} returns a set $S$ with $\E_{\calg}[I(S)] \geq (1-\frac{1}{e}-\epsilon)\OPT$, with probability at least $3/5$.  The runtime of Algorithm \ref{approx-exp-infl} is $\Theta(R) = \Theta(mk\epsilon^{-2}\log(n))$.
\end{theorem}

%The remainder of this section will be dedicated to the proof of Theorem \ref{thm.alg1.correctness}.  

The idea behind the proof of Theorem \ref{thm.alg1.correctness} is as follows.  First, we observe that the influence of a set of nodes $S$ is precisely $n$ times the probability that a node $u$, chosen uniformly at random, influences a node from $S$ in the transpose graph $g^T$.

\begin{observation}
\label{mainobs2}
For each subset of nodes $S \subseteq \calg$,
$\E_{g \sim \calg}[I_g(S)] = n \cdot {\Pr}_{u, g \sim \calg}[S \cap C_{g^T}(u) \neq \emptyset].$
\end{observation}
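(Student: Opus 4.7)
The plan is to prove this by a simple double-counting/reversal-of-reachability argument. The key observation is that reachability in $g$ and in its transpose $g^T$ are dual: for any two nodes $u$ and $v$ and any fixed unweighted graph $g$, there is a directed path from $v$ to $u$ in $g$ if and only if there is a directed path from $u$ to $v$ in $g^T$. Applying this to the set $S$, this means that $u \in C_g(S)$ (i.e., $u$ is reachable from some $s \in S$ in $g$) if and only if $S \cap C_{g^T}(u) \neq \emptyset$ (i.e., some $s \in S$ is reachable from $u$ in $g^T$).

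Given this duality, I would begin by rewriting $I_g(S) = |C_g(S)|$ as a sum of indicator variables over all $n$ vertices $u \in V$:
\[
I_g(S) \;=\; \sum_{u \in V} \mathbf{1}[u \in C_g(S)] \;=\; \sum_{u \in V} \mathbf{1}[S \cap C_{g^T}(u) \neq \emptyset],
\]
where the second equality uses the duality described above. Taking expectations over $g \sim \calg$ and exchanging the (finite) sum with the expectation yields
\[
\E_{g \sim \calg}[I_g(S)] \;=\; \sum_{u \in V} \Pr_{g \sim \calg}\!\left[S \cap C_{g^T}(u) \neq \emptyset\right].
\]

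To finish, I would rewrite the sum on the right-hand side as $n$ times the average over $u$ chosen uniformly at random, using the convention from the Notation paragraph that placing $u$ in the subscript of $\Pr$ denotes a uniformly random vertex. This gives
\[
\sum_{u \in V} \Pr_{g \sim \calg}\!\left[S \cap C_{g^T}(u) \neq \emptyset\right] \;=\; n \cdot \Pr_{u,\, g \sim \calg}\!\left[S \cap C_{g^T}(u) \neq \emptyset\right],
\]
which is exactly the claimed identity. There is no real obstacle here beyond clearly stating the reachability-reversal property, which follows immediately from the definition $g^T$ by reversing every edge of every path. The observation is entirely combinatorial for each fixed $g$, and the randomness in $\calg$ and in $u$ plays no role except in taking the expectation at the end.
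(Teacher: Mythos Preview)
Your proposal is correct and follows essentially the same argument as the paper: expand $I_g(S)$ as a sum of indicators over vertices $u$, use the transpose/reachability duality $u \in C_g(S) \Leftrightarrow S \cap C_{g^T}(u) \neq \emptyset$, take expectations, and rewrite the resulting sum as $n$ times a probability over a uniformly random $u$. The only cosmetic difference is that the paper phrases the indicator as ``$\exists v \in S$ with $u \in C_g(v)$'' rather than ``$u \in C_g(S)$'', which is the same thing.
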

\begin{proof}
\[ \begin{aligned}
\E_{g \sim \calg}[I_g(S)]
&= \sum_{u \in g}{\Pr}_{g \sim \calg}[\exists v \in S \text{ such that } u \in C_g(v)] \\
&= \sum_{u \in g}{\Pr}_{g \sim \calg}[\exists v \in S \text{ such that } v \in C_{g^T}(u)] \\
&= n{\Pr}_{u,{g \sim \calg}}[\exists v \in S \text{ such that } v \in C_{g^T}(u)] \\
& = n{\Pr}_{u,{g \sim \calg}}[S \cap C_{g^T}(u) \neq \emptyset].
\end{aligned} \]
\end{proof}

Observation \ref{mainobs2} implies that we can estimate $\E_{\calg}[I(S)]$ by estimating the probability of the event $S \cap C_{g^T}(u) \neq \emptyset$.  %The hypergraph $\calh$ summarizes our attempts to estimate this quantity for all subsets simultaneously.  
The degree of a node $v$ in $\calh$ is precisely the number of times we observed that $v$ was influenced by a randomly selected node $u$.  We can therefore think of $\frac{deg_{\calh}(S)}{m(\calh)}$, the fraction of edges that intersect $S$, as an estimator for $\frac{1}{n} \cdot \E_{\calg}[I(S)]$.  Our primary task is to show that it is a \emph{good} estimator.  We do this in two steps: in Lemma \ref{lem.hypergraph.size} we show that $m(\calh)$ (the number of iterations taken by BuildHypergraph) is likely to be large, then in Lemma \ref{lem.hypergraph.concentration} we bound our estimation error conditional on $m(\calh)$ being large enough.

%We now show that the algorithm takes enough samples to accurately estimate the influences of the nodes in the network.  This requires two steps.  First, we  show that runtime $R = c mk\epsilon^{-2}\log(n)$ is sufficiently large that the number of hyperedges completed after $R$ steps will be large, and concentrated around its expectation.  We then show that influences are estimated accurately when the number of hyperedges is large enough.
%Let $\AVG$ denote the expected number of edges checked on an iteration of BuildHypergraph.

For establishing bounds on $m(\calh)$, it will be convenient to denote by $\AVG$ the expected number of steps taken by the simulation process on any given iteration of BuildHypergraph.  That is, $\AVG$ is the expected sum of degrees (in $\calg^T$) of nodes in $C_{g^T}(u)$, where the expectation is over $g \sim \calg$ and the uniform choice of $u$.
%Next, we show that the resulting hypergraph is of sufficient size to estimate the influence of each set, up to an additive error that shrinks with $\epsilon$, with high probability.  First a 
The following technical lemma allows us to compare $\AVG$ (measured in edges) to the optimal influence (measured in nodes).

\begin{claim}
\label{claim.avg.bound}
$\AVG \leq \frac{m}{n}\OPT$.
\end{claim}
\begin{proof}
Given a vertex $u$ and an edge $e = (v,w)$, consider the random event indicating whether edge $e$ is checked as part of the process of growing a depth-first search rooted at $u$ in the IC process corresponding to graph $g^{T} \sim \calg^T$.  Note that edge $e$ is checked if and only if node $v$ is influenced by node $u$ in this invocation of the IC process.  In other words, edge $e = (v,w)$ is checked as part of the influence spread process on line 4 of BuildHypergraph if and only if $v \in Z$.
Write $m_{g^T}(u)$ for the random variable indicating the number of edges that are checked as part of building the influence set $Z$ starting at node $u$ in $g^{T}$.  We then have
\begin{align*}
\AVG 
& = \frac{1}{n} \sum_{u \in \calg} \E_{g \sim \calg}[m_{g^T}(u)]
 = \frac{1}{n} \sum_{e = (v,w) \in \calg^T} \E_{g \sim \calg }\left[|\{ u : v \in C_{g^T}(u) \}| \right].
\end{align*}
Noting that $v \in C_{g^T}(u)$ if and only if $u \in C_{g^T}(v)$, we then have
\begin{align*}
\AVG
& = \frac{1}{n} \sum_{e = (v,w) \in \calg^T} \E_{g \sim \calg }\left[|\{ u : u \in C_{g}(v) \}| \right] \\
& = \frac{1}{n} \sum_{e = (v,w) \in \calg^T} \E_{\calg}[I(\{v\})] \\
& \leq \frac{1}{n} \sum_{e = (v,w) \in \calg^T} \OPT
 = \frac{m}{n} \OPT
\end{align*}
as required.
%\end{align*}
\end{proof}

%For a given $J \geq \frac{48n\log(n)}{\OPT \epsilon^3}$, Let $D_S^J$ denote the degree of $S$ in $\calh$ after $J$ edges have been added to $\calh$.  We allow $J > M$, by considering the path of execution of BuildHypergraph had there not been a runtime bound.  Thinking of $D_S^J$ as a random variable, we have that $D_S^J$ is the sum of $J$ identically distributed Bernoulli random variables each with probability $\E_{\calg}[I(S)]/n \geq \epsilon OPT/n$, by Observation \ref{mainobs2}.  Our goal is to show that

%We now show that the algorithm takes enough samples to accurately estimate the influences of the nodes in the network.  This requires two steps.  First, we  show that runtime $R = c mk\epsilon^{-2}\log(n)$ is sufficiently large that the number of hyperedges completed after $R$ steps will be large, and concentrated around its expectation.
%Let $\AVG$ denote the expected number of edges checked on an iteration of BuildHypergraph.
% = \E_{v,\calg}[I(v)]$, the expected influence of a randomly selected vertex.

%a sufficiently rich hypergraph structure, with high probability over the random outcomes of the influence cascade model.  The idea behind the proof is to establish that $\OPT \cdot \frac{m}{n}$ is an upper bound on the expected number of steps needed to build an edge, so it is unlikely that significantly fewer than $R/(\OPT \cdot \frac{m}{n})$ hyperedges have been completed after $R$ steps.
%

We will think of $\calh$ as being built incrementally, one edge at a time, as in the execution of BuildHypergraph.
Since we will discuss the state of $\calh$ at various points of execution, we will write $M$ for the number of edges in $\calh$ when BuildHypergraph terminates.  Note that $M$ as a random variable, corresponding to $m(\calh)$ after BuildHypergraph has completed.

\begin{lemma}
\label{lem.hypergraph.size}
%Hypergraph $\calh$ will contain at least $\frac{48n\log(n)}{\AVG\epsilon^{2}}$ edges, with probability at least $\frac{2}{3}$.
%With probability at least $1 - 1/n^c$, $\frac{1}{1+\epsilon} \cdot \frac{n R}{m \AVG} \leq M \leq \frac{1}{1-\epsilon} \cdot \frac{n R}{m \AVG}$.
With probability at least $1 - \frac{2}{n^{ck/4(1+\epsilon)}}$, $M$ satisfies $\frac{1}{1+\epsilon} \cdot \frac{R}{\AVG} \leq M \leq \frac{1}{1-\epsilon} \cdot \frac{R}{\AVG}$.
\end{lemma}

\begin{proof}
Write $x_i$ for the fraction of all edges in $\calg$ that are checked as part of the $i$th iteration of BuildHypergraph in Algorithm \ref{approx-exp-infl}.  That is, when building the $i$th hyperedge of $\calh$.  Note then that $x_i \in [0,1]$ for all $i$, and $\E[x_i] = \frac{\AVG}{m}$ for every $i$.  Moreover, $M$ is precisely the minimal value $J$ such that $\sum_{i=1}^J x_i \geq R/m$.  
%Note also that, for each $i$, 
%\begin{align*}
%\E[x_i] = \frac{1}{n} \sum_{u \in \calg} \frac{1}{m} \E[m_{g^T}(u)] \\
%& = \frac{1}{m} \frac{1}{n} \sum_{e = (v,w) \in \calg^T} \E_{g \sim \calg }\left[|\{ u : v \in C_{g^T}(u) \}| \right]
%& = \frac{1}{m} \frac{1}{n} \sum_{e = (v,w) \in \calg^T} \E_{g \sim \calg }\left[|\{ u : u \in C_{g}(v) \}| \right]
%\end{align*}
%Note also that $x_i = \AVG/m$ for each $i$.
%\begin{align*}
%& X \cdot \E_{u,g \sim \calg}[1+m_{g^T}(u)]  = X + \frac{X}{n} \E_{g \sim \calg}\left[\sum_{u}m_{g^T}(u)\right]  \\
%& = X + \frac{48\log(n)}{OPT\epsilon^3} \E_{g \sim \calg}\left[\sum_{u}m_{g^T}(u)\right]  \\
%& =  X + \frac{48\log(n)}{OPT\epsilon^3} \sum_{e = (v,w) \in \calg^T} \E_{g \sim \calg }\left[|\{ u : v \in C_{g^T}(u) \}| \right] \\
%& = X + \frac{48\log(n)}{OPT\epsilon^3} \sum_{e = (v,w) \in \calg^T} \E_{g \sim \calg }\left[|\{ u : u \in C_{g}(v) \}| \right] \nonumber  \\
%& \leq  \frac{48n\log(n)}{\epsilon^3} + \frac{48\log(n)}{OPT\epsilon^3} \sum_{e = (v,w) \in \calg^{T}} OPT  \\
%&= \frac{48(m+n)\log(n)}{\epsilon^3} .
%\end{align*}

%Let $J_1 = \frac{1}{1+\epsilon} \cdot \frac{n R}{m \AVG}$ and $J_2 = \frac{1}{1-\epsilon} \cdot \frac{n R}{m \AVG}$.  
Let $J_1 = \frac{1}{1+\epsilon} \cdot \frac{R}{\AVG}$ and $J_2 = \frac{1}{1-\epsilon} \cdot \frac{R}{\AVG}$.
We wish to show that, with high probability, $M \in [J_1, J_2]$.  This is equivalent to the intersection of events $\sum_{i=1}^{J_1} x_i < R/m$ and $\sum_{i=1}^{J_2} x_i > R/m$.  For the first event, note that
\begin{equation}
\label{eq.j1.sum}
\E\left[ \sum_{i=1}^{J_1} x_i \right] = \sum_{i=1}^{J_1} \E[x_i] = J_1 \cdot \frac{\AVG}{m} = \frac{R}{m(1+\epsilon)}.
\end{equation}
%= c(1+\epsilon)^{-1}k\epsilon^{-2}\log(n). \]
Chernoff bounds then imply
\begin{equation}
\label{eq.j1.concentrate}
\Pr\left[ \sum_{i=1}^{J_1} x_i \geq R/m \right] 
= \Pr\left[ \sum_{i=1}^{J_1} x_i \geq (1+\epsilon)\E\left[\sum_{i=1}^{J_1} x_i\right] \right]
< e^{-\epsilon^2 \cdot \E\left[ \sum_{i=1}^{J_1} x_i \right] / 4}.
\end{equation}
%& < e^{-\epsilon^2 \cdot (1+\epsilon)^{-1}ck\epsilon^{-2}\log(n)/4} \\
%& < 1/n^{c(1+\epsilon)^{-1}k/4}.
%\end{align*}
Substituting \eqref{eq.j1.sum} into \eqref{eq.j1.concentrate} and using $R = cmk\epsilon^{-2}\log(n)$, we obtain
\[ 
\Pr\left[ \sum_{i=1}^{J_1} x_i \geq R/m \right] 
< e^{-\epsilon^2 \cdot (1+\epsilon)^{-1}ck\epsilon^{-2}\log(n)/4}
< \frac{1}{n^{ck/4(1+\epsilon)}}.
\]
Applying an identical argument for $J_2$, we also have 
\begin{align*}
\Pr\left[ \sum_{i=1}^{J_2} x_i \leq R/m \right] < \frac{1}{n^{ck/2(1-\epsilon)}} < \frac{1}{n^{ck/4(1+\epsilon)}}.
\end{align*}
Taking a union bound over the two events yields the desired bound.
%
%Let $X = \frac{48n\log(n)}{OPT\epsilon^3}$ for notational convenience.  Consider the first (up to) $X$ iterations of the loop on lines 2-6 of BuildHypergraph.  Note that $\calh$ will have at least $X$ edges if the total runtime of the first $X$ iterations is at most $R$.  The expected runtime of the algorithm over these iterations is
%\begin{align*}
%& X \cdot \E_{u,g \sim \calg}[1+m_{g^T}(u)]  = X + \frac{X}{n} \E_{g \sim \calg}\left[\sum_{u}m_{g^T}(u)\right]  \\
%& = X + \frac{48\log(n)}{OPT\epsilon^3} \E_{g \sim \calg}\left[\sum_{u}m_{g^T}(u)\right]  \\
%& =  X + \frac{48\log(n)}{OPT\epsilon^3} \sum_{e = (v,w) \in \calg^T} \E_{g \sim \calg }\left[|\{ u : v \in C_{g^T}(u) \}| \right] \\
%& = X + \frac{48\log(n)}{OPT\epsilon^3} \sum_{e = (v,w) \in \calg^T} \E_{g \sim \calg }\left[|\{ u : u \in C_{g}(v) \}| \right] \nonumber  \\
%& \leq  \frac{48n\log(n)}{\epsilon^3} + \frac{48\log(n)}{OPT\epsilon^3} \sum_{e = (v,w) \in \calg^{T}} OPT  \\
%&= \frac{48(m+n)\log(n)}{\epsilon^3} .
%\end{align*}
%Here , the second equality (line $4$ from above) follows by noting that an edge $(v,w) \in \calg^T$ is traversed as part of $m_{g^T}(u)$ if and only if  $v$ appears in $C_{g^T}(u)$.
%
%Thus, by the Markov inequality, the probability that the runtime over the first $X$ iterations is greater than $R = 144(m+n)\epsilon^{-3}\log(n)$ is at most $\frac{1}{3}$.  The probability that at least $X$ edges are present in hypergraph $\calh$ is therefore at least $\frac{2}{3}$, as required.
\end{proof}

%probability $1 - 1/{\text{POLY}(n)}$. 
%We then show that such estimation guarantees suffice to find good approximations to the influence maximization problem.
%
%Here we make use of the dependency on $\epsilon$; we show that increasing the sample size by a factor of $\epsilon^{-3}$ provides enough precision to estimate influences effectively up to an additive error that shrinks with $\epsilon$.
%
%Write $m(\calh)$ and $deg_{\calh}(S)$ for the number of edges of $\calh$ and the number of edges from $\calh$ incident with a node from $S$, respectively.  Our approach is to apply concentration bounds to the random variable $deg_{\calh}(S)$, which can be viewed as the sum of $m(\calh)$ Bernoulli random variables.  However,
%an important subtlety is that the $m(\calh)$ is itself a random variable, determined by the stopping condition of BuildHypergraph.  We must bound the correlation between $m(\calh)$ and our influence estimation.  We show that the value of $m(\calh)$ is sufficiently concentrated, as established in Lemma \ref{lem.hypergraph.size}, that the resulting bias is insignificant.

We are now ready to show that the resulting hypergraph estimates the influence of each set within a sufficiently small error.
For a given integer $J$, Let $D_S^J$ be a random variable denoting the degree of $S$ in $\calh$ after $J$ edges have been added to $\calh$.  We will allow $J > M$, by considering the (hypothetical) path of execution of BuildHypergraph had there not been a runtime bound.

\begin{lemma}
\label{lem.hypergraph.concentration}
%Suppose that $m(\calh) \geq \frac{48n\log(n)}{\OPT\epsilon^{3}}$.  Then, 
For any $J \geq \frac{1}{1+\epsilon} \cdot \frac{R}{\AVG}$ and set of nodes $S \subseteq V$ with $|S| \leq k$, 
\begin{equation}
\label{eq.concentration}
\Pr\left[D_S^{J} > \E[D_S^{J}] + \frac{\epsilon \cdot J \cdot \OPT}{n}\right] < \frac{1}{n^{ck/4(1+\epsilon)}}
\end{equation}
%\[ \Pr[ deg^J_{\calh}(S) < J \cdot \frac{1}{n} ( \E_{\calg}[I(S)] - \epsilon \AVG ) ] < \frac{1}{n^k}\]
and
\begin{equation}
\label{eq.concentration.2}
\Pr\left[D_S^{J} < \E[D_S^{J}] - \frac{\epsilon \cdot J \cdot \OPT}{n}\right] < \frac{1}{n^{ck/2(1+\epsilon)}}.
\end{equation}
%\[ \Pr[ deg^J_{\calh}(S) > J \cdot \frac{1}{n} \cdot ( \E_{\calg}[I(S)] + \epsilon \AVG ) ] < \frac{1}{n^k}.\]
%$\Pr[ |\E_{\calg}[I(S)] - \frac{n}{m(\calh)}deg_{\calh}(S)| > \epsilon \OPT ] < \frac{1}{n^k}$, with probability taken over randomness in $\calh$.
\end{lemma}

%\begin{lemma}
%\label{lem.hypergraph.concentration}
%%Suppose that $m(\calh) \geq \frac{48n\log(n)}{\OPT\epsilon^{3}}$.  Then, 
%For any set of nodes $S \subseteq V$, $\Pr[ |\E_{\calg}[I(S)] - \frac{n}{m(\calh)}deg_{\calh}(S)| > \epsilon \OPT ] < \frac{1}{n^k}$, with probability taken over randomness in $\calh$.
%\end{lemma}

\begin{proof}
%Write $m = m(\calh)$; we think of $m$ as a random variable.
%Consider an execution of BuildHypergraph without a runtime bound (i.e., that repeatedly adds edges ad infinitum).  
%
%by Observation \ref{mainobs2}.  
%Our goal is to show that, for $J \geq R/\AVG(1+\epsilon)$,
%\begin{equation}
%\label{eq.concentration}
%\Pr\left[D_S^{J} > \E[D_S^{J}] + \frac{\epsilon \cdot J \cdot \AVG}{n}\right] < \frac{1}{n^{k}}.
%\end{equation}
%
%
%%By assumption we have that $m(\calh) \geq 24n\log(n)/{(\OPT \epsilon^3)}$, and we also know $m(\calh) \leq R = 72(m+n)\log(n)/\epsilon^3$ (since each edge has size at least $1$).  
%Choose
%%$$J \in [\frac{24n\log(n)}{\OPT \epsilon^3}, \frac{72(m+n)\log(n)}{\epsilon^3}] .$$
%$J \geq \frac{24n\log(n)}{\OPT \epsilon^3}$ and condition on the event $m(\calh) = J$. 
%We consider the event $$|\E_{\calg}[I(S)] - \frac{n}{m(\calh)}deg_{\calh}(S)| > \epsilon \OPT.$$
%We will first establish that $\Pr\left[|D_S^{J} - \E[D_S^{J}]| > \frac{\epsilon \cdot J \cdot OPT}{n}\right] < \frac{1}{n^{12}}$ for a fixed $J$.  
We first show \eqref{eq.concentration}.
Observation \ref{mainobs2} implies that $D_S^J$ is the sum of $J$ identically distributed Bernoulli random variables, each with probability $\E_{\calg}[I(S)]/n$.  Note that we must have $\E[D_S^J] \leq J \cdot \frac{\OPT}{n}$, with equality occuring for the set $S$ of maximal influence.
%\geq \epsilon OPT/n$, 
%
%BJL:  EDIT FROM THIS POINT ONWARD!!
%
For notational convenience, choose $\lambda \in (0,1]$ so that $\lambda \cdot \E[D_S^J] = \frac{J \cdot \OPT}{n}$.
%We consider two cases.  First, suppose $\lambda < 1$.  In this case, %$\E_{\calg}[I(S)] \geq \AVG$.  In this case,
%Let $D_S$ denote the degree of $S$ in $\calh$, for notational convenience. Thinking of $D_S$ as a random variable, we have that $D_S$ is the sum of $m(\calh) = J$ identically distributed Bernoulli random variables each with probability $\E_{\calg}[I(S)]/n \geq \epsilon OPT/n$, by Observation \ref{mainobs2}.  In particular, since $m(\calh) = J \geq 24n\log(n)/{(OPT \epsilon^3)}$,
%\[ \E[D_S^J] = \E_{\calg}[I(S)] \cdot \frac{J}{n} \geq \frac{J \cdot \AVG}{n} \geq c(1+\epsilon)^{-1}k\epsilon^{-2}\log n.\]
Then 
\[ \Pr\left[D_S^{J} > \E[D_S^{J}] + \frac{\epsilon \cdot J \cdot \OPT}{n}\right] = \Pr\left[ D_S^J > \left(1 + \frac{\epsilon}{\lambda}\right) \E\left[ D_S^J \right] \right]. \]
We consider two cases.  First, if $\lambda > \epsilon$, then $\epsilon / \lambda < 1$, and the Multiplicative Chernoff bound (\ref{lemma.multchernoff})  implies that 
\begin{align*}
%\Pr\left[D_S^J > \E[D_S^J] + \frac{\epsilon \cdot J \cdot \AVG}{n}\right] 
\Pr\left[ D_S^J > \left(1 + \frac{\epsilon}{\lambda}\right) \E\left[ D_S^J \right] \right]
& < e^{-\epsilon^2 \cdot \E[D_S^J]/4\lambda^2 }.
\end{align*}
Using the fact that $\E[D_S^J] = \frac{J \cdot \OPT}{\lambda n}$ and $J \geq \frac{1}{1+\epsilon} \cdot \frac{R}{\AVG} = \frac{c m k \epsilon^{-2} \log n }{\AVG(1+\epsilon)}$, we obtain
\begin{align*}
\Pr\left[ D_S^J > \left(1 + \frac{\epsilon}{\lambda}\right) \E\left[ D_S^J \right] \right]
%& \leq e^{-\epsilon^2 J \OPT / 4 \lambda n } \\
%& \leq e^{-\epsilon^2 (c n  k \epsilon^{-2} \log n ) \OPT / 4 (1 + \epsilon) \lambda n \AVG } \\
& < e^{-\frac{m \OPT c k \log n}{4(1+\epsilon)\lambda^3 n \AVG } }
\leq e^{- c k \log n / 4(1+\epsilon) } = \frac{1}{n^{ck/4(1+\epsilon)}}
\end{align*}
where the second inequality used Claim \ref{claim.avg.bound} and $\lambda \leq 1$.  %$\OPT \geq \frac{n}{m} \AVG$.
%Since $1/\lambda \geq 1$ and $\frac{\OPT}{\AVG} \geq 1$, we conclude that this is at most $e^{- c k \log n / 4 } = 1/n^{ck/4}$.
%& < e^{- c k \log n / 4 } \\
%& = 1 / n^{c k / 4}.
%& < e^{-12(1+\epsilon)^{-1}k\log(n)} \\
%& = \frac{1}{n^{12k(1+\epsilon)^{-1}}}.
%\end{align*}
%Similarly, we can use the Multiplicative Chernoff bound (\ref{lemma.multchernoff}) to conclude that
%\begin{align*} 
%& \Pr\left[D_S > (1+\epsilon)\frac{m(\calh)}{n}\E_{\calg}[I(S)]\right] < e^{-\E[D_S]\epsilon^2/4} \le e^{-12\log(n)} = \frac{1}{n^{12}}.
%\end{align*}

Next suppose $\lambda \leq \epsilon$.  Then $\epsilon / \lambda \geq 1$, and the Chernoff bound implies
%Next suppose $\E_{\calg}[I(S)] < \epsilon \AVG$, so $\E[D_S^J] < \epsilon \cdot \AVG \cdot J/n$.  In this case, we have $D_S^J \geq \E[D_S^J] - \frac{\epsilon \cdot J \cdot \AVG}{n}$ surely, and the Multiplicative Chernoff bound (\ref{lemma.multchernoff}) implies that 
\begin{align*}
%\Pr\left[D_S^J > \E[D_S^J] + \frac{\epsilon \cdot J \cdot OPT}{n}\right] 
%& = \Pr\left[D_S^J > \E[D_S^J] \left(1 + \frac{\epsilon \cdot J \cdot OPT}{n \E[D_S^J]}\right)\right] \\
%& < e^{-\frac{\epsilon \cdot J \cdot OPT}{2n}} < e^{-12\log(n)} \le \frac{1}{n^{12}}.
\Pr\left[ D_S^J > \left(1 + \frac{\epsilon}{\lambda}\right) \E\left[ D_S^J \right] \right]
& < e^{-\epsilon \cdot \E[D_S^J]/3\lambda }.
\end{align*}
Again using $\E[D_S^J] = \frac{J \cdot \OPT}{\lambda n}$ and $J \geq \frac{c m k \epsilon^{-2} \log n }{\AVG(1+\epsilon)}$, we obtain
\begin{align*}
\Pr\left[ D_S^J > \left(1 + \frac{\epsilon}{\lambda}\right) \E\left[ D_S^J \right] \right]
& < e^{-\frac{m \OPT c k \log n }{3 (1+\epsilon) \lambda^2 n \AVG} }
< e^{- c k \log n / 3 (1+\epsilon) }
= \frac{1}{n^{c k / 3 (1+\epsilon)}}
\end{align*}
where the second inequality used Claim \ref{claim.avg.bound} and $\lambda \leq 1$.  This completes the derivation of \eqref{eq.concentration}.  The derivation of \eqref{eq.concentration.2} is similar, yielding
\begin{align*}
\Pr\left[D_S^{J} > \E[D_S^{J}] - \frac{\epsilon \cdot J \cdot \OPT}{n}\right] 
&= \Pr\left[ D_S^J > \left(1 - \frac{\epsilon}{\lambda}\right) \E\left[ D_S^J \right] \right]\\
&< \frac{1}{n^{ck/2(1+\epsilon)}}
\end{align*}
as required.
\end{proof}

Finally, we must show that the greedy algorithm applied to $\calh$ in BuildSeedSet returns a good approximation to the original optimization problem.  Recall that, in general, the greedy algorithm for submodular function maximization proceeds by repeatedly selecting the singleton with maximal contribution to the function value, up to the cardinality constraint.  The following lemma shows that if one submodular function is approximated sufficiently well by a distribution of submodular functions, then applying the greedy algorithm to a function drawn from the distribution yields a good approximation with respect to the original.

\begin{lemma}
\label{lem.submod-approx}
Choose $\delta > 0$ and suppose that $f \colon 2^V \to \mathbb{R}_{\geq 0}$ is an arbitrary 
%non-decreasing submodular 
function.  Let $D$ be a distribution over 
%non-decreasing submodular 
functions with the property that, for all sets $S$ with $|S| \leq k$, $\Pr_{\hat{f} \sim D}[|f(S) - \hat{f}(S)| > \delta] < 1/n^{k+\ell}$.  If we write $S_{\hat{f}}$ for the set returned by the greedy algorithm on input $\hat{f}$, then
\[ \Pr_{\hat{f} \sim D}\left[f(S_{\hat{f}}) < (1 - 1/e)\left(\max_{S \colon |S| = k}f(S)\right) - 2\delta\right] < 1/n^\ell. \]
\end{lemma}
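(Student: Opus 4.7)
My plan is to leverage the classical $(1-1/e)$-approximation guarantee for the greedy algorithm on non-decreasing submodular functions, treating $\hat{f}$ itself as the submodular function being maximized. Since that guarantee is a deterministic statement once $\hat{f}$ is fixed, it gives $\hat{f}(S_{\hat{f}}) \geq (1-1/e)\max_{|T|=k}\hat{f}(T)$; in particular, letting $S^* = \argmax_{|T|=k} f(T)$, I get $\hat{f}(S_{\hat{f}}) \geq (1-1/e)\hat{f}(S^*)$. The rest of the argument transfers this bound from $\hat{f}$ back to $f$ using the approximation hypothesis at only two sets.

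Concretely, on the event that $|\hat{f}(S^*) - f(S^*)| \leq \delta$ and $|\hat{f}(S_{\hat{f}}) - f(S_{\hat{f}})| \leq \delta$, I would chain
\[ f(S_{\hat{f}}) \geq \hat{f}(S_{\hat{f}}) - \delta \geq (1-1/e)\hat{f}(S^*) - \delta \geq (1-1/e)(f(S^*) - \delta) - \delta \geq (1-1/e) f(S^*) - 2\delta. \]
The first and third inequalities use the approximation hypothesis, the second is the deterministic greedy guarantee applied to $\hat{f}$, and the last uses $1-1/e \leq 1$.

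To bound the probability that either approximation fails, I would handle the fixed set $S^*$ by direct application of the hypothesis (failure probability at most $1/n^3$), and handle the random set $S_{\hat{f}}$ by union-bounding the approximation hypothesis over all candidate subsets of size at most $k$. Combining via a final union bound, both approximations hold except with probability $O(1/n)$, which yields the claimed bound.

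The main obstacle is precisely the last step: since $S_{\hat{f}}$ is itself a random set determined by $\hat{f}$, the per-set approximation bound cannot be invoked at $S_{\hat{f}}$ directly, and one must union-bound over the family of subsets $S_{\hat{f}}$ could possibly equal. This consumes most of the slack between the per-set bound $1/n^3$ and the target $1/n$. A tempting alternative, namely a step-by-step analysis ensuring that each greedy pick has near-optimal marginal under $f$, would let errors accumulate over the $k$ iterations and yield an additive loss of order $k\delta$ rather than the tight $2\delta$ in the statement. Applying the $(1-1/e)$ guarantee to $\hat{f}$ as a single, deterministic black-box step is what preserves the $2\delta$ constant.
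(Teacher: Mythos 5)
Your deterministic chain of inequalities is exactly the one in the paper's proof, and applying the $(1-1/e)$ greedy guarantee to $\hat{f}$ as a single black-box step is the right move. The gap is in how you transfer the approximation hypothesis to the random set $S_{\hat{f}}$. You propose to union-bound ``over all candidate subsets of size at most $k$,'' i.e.\ over every set that $S_{\hat{f}}$ could possibly equal. That family has size $\sum_{i \leq k}\binom{n}{i}$, so with the per-set bound $1/n^{3}$ the union bound yields a failure probability of order $n^{k-3}$, which is vacuous already for $k \geq 3$; it does not merely ``consume most of the slack'' down to $1/n$ --- it overshoots it entirely. Nothing in the hypotheses restricts the support of $D$, so there is no a priori reason the reachable output sets form a polynomially sized family, and as written this step fails.

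The paper closes this gap by union-bounding not over the possible outputs but over the sets the greedy algorithm actually \emph{evaluates} along its execution path: at iteration $i$ it queries $\hat{f}$ only on the at most $n$ sets of the form $S_{i-1} \cup \{x\}$, so over $k \leq n$ iterations at most $n^{2}$ evaluations occur, and the union bound gives total failure probability at most $n^{2} \cdot n^{-3} = 1/n$. Since $S_{\hat{f}} = S_k$ is itself one of the evaluated sets, $|f(S_{\hat{f}}) - \hat{f}(S_{\hat{f}})| \leq \delta$ holds on that event, and the rest of your chain goes through unchanged. Note that this use of the per-iteration structure is only for the probability accounting; it is not the ``step-by-step marginal'' analysis you rightly rejected, and it still preserves the additive $2\delta$ rather than $k\delta$.
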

\begin{proof}
%{Lemma \ref{lem.submod-approx}}
%Choose $S^* \in \argmax_{|S| = k}\{f(S)\}$.  With probability at least $1 - 1/n^3$, $\hat{f}(S^*) \geq f(S^*) - \delta$.  So, in particular, $\max_{|S| = k} \hat{f}(S) \geq f(S^*) - \delta$.
%
%We run the greedy algorithm on function $\hat{f}$; let $S_i$ be the set of nodes selected up to and including iteration $i$.  
Taking a union bound over all sets of size at most $k$, of which there are at most $n^k$, we have% (with $S_0 = \emptyset$).  On iteration $i$, we consider each set of the form $S_{i-1} \cup \{x\}$ where $x$ is a singleton.  There are at most $n$ of these sets, and hence the union bound implies that $f$ and $\hat{f}$ differ by at most $\delta$ on each of these sets, with probability at least $1 - 1/n^2$.  In particular, $|f(S_i) - \hat{f}(S_i)| < \delta$.  Taking the union bound over all iterations, we have that $|f(S_k) - \hat{f}(S_k)| < \delta$ with probability at least $1 - 1/n$.  We therefore have
\[ \Pr[ \exists S \colon |\hat{f}(S) - f(S)| > \delta ] \leq n^k \cdot n^{-k -\ell} = 1 / n^{\ell}. \]
Conditioning on the event that $|\hat{f}(S) - f(S)| \leq \delta$ for every set $S$ of size at most $k$, we then have
\[ f(S_{\hat{f}} ) \geq \hat{f}(S_{\hat{f}}) - \delta \geq (1-1/e)\hat{f}(S^*) - \delta \geq (1-1/e)f(S^*) - 2\delta \]
as required.
%conditioning on an event of probability $1-1/n$.
\end{proof}

We are now ready to complete our proof of Theorem \ref{thm.alg1.correctness}.

%BJL:  NEED TO PROOFREAD THIS PROOF, AND FIX CONSTANTS.  ALSO -- MAYBE MOVE ABOVE THE PROOF OF THE TECHNICAL LEMMAS?

\begin{proofof}{Theorem \ref{thm.alg1.correctness}}
As in the proof of Lemma~\ref{lem.hypergraph.size}, we will write $J_1 = \frac{1}{1+\epsilon} \cdot \frac{R}{\AVG}$ and $J_2 = \frac{1}{1-\epsilon} \cdot \frac{R}{\AVG}$.
Condition on the event that $M \in [J_1, J_2]$, which (by Lemma \ref{lem.hypergraph.size}) has probability at least $1 - \frac{2}{n^{ck/4(1+\epsilon)}}$.  We know from Lemma \ref{lem.hypergraph.concentration} that, with probability at least $1 - \frac{1}{n^{ck/4(1+\epsilon)}}$, $|D_S^{J_2} - \E[D_S^{J_2}]| < \frac{\epsilon \cdot J_2 \cdot OPT}{n}$; we will condition on this event as well.  Since $D_S^M$ is dominated by $D_S^{J_2}$, and since $J_2 < \tfrac{1+\eps}{1-\eps}J_1 \leq (1+4\eps)M$ for $\epsilon \leq 1/2$, we conclude that
\begin{align*} 
D_S^M \leq D_S^{J_2} & < \E[D_S^{J_2}] + \frac{\epsilon OPT}{n}J_2 = \E[D_S^M] \cdot \frac{J_2}{M} + \frac{\epsilon OPT}{n}J_2 \\
& < \E[D_S^M] + 4\eps \cdot \E[D_S^M] + \frac{\epsilon OPT}{n}M \\
& < \E[D_S^M] + \frac{5 \epsilon OPT}{n}M.
\end{align*}
Similarly, the fact that $D_S^M$ dominates $D_S^{J_1}$ implies $D_S^M > \E[D_S^M] - \frac{5 \epsilon OPT}{n}M$, conditioning on an event with probability at least $1 - \frac{1}{n^{ck/4(1+\epsilon)}}$.  Taking the union bound over the complement of the conditioned events, we conclude that 
% the unconditional probability 
%of \eqref{eq.concentration} is at most $\frac{3}{n^4} < \frac{1}{n^3}$ as required. 
%that
%Lemma \ref{lem.hypergraph.size} and \ref{lem.hypergraph.concentration} together imply that, conditioning on an event of probability at least $3/5$, we will have 
$$\Pr\left[ \left|\E_{\calg}[I(S)] - \frac{n \cdot deg_{\calh}(S)}{m(\calh)} \right| > 5\epsilon \OPT \right] < \frac{4}{n^{ck/4(1+\epsilon)}}$$ for each $S \subseteq V$.  We then apply Lemma \ref{lem.submod-approx} with $f(S) := \E_{\calg}[I(S)]$, $\hat{f}(S) := \frac{n \cdot deg_{\calh}(S)}{m(\calh)}$ (drawn from then distribution corresponding to distribution of $\calh$ returned by BuildHypergraph), and $\delta = 5 \eps \OPT$.  
As long as $\tfrac{ck}{4(1+\epsilon)} \geq k+1$, Lemma \ref{lem.submod-approx} implies that, with probability at least $1 - \frac{1}{n}$, the greedy algorithm applied to $\calh$ returns a set $S$ with $\E_{\calg}[I(S)] \geq (1 - \tfrac{1}{e})\OPT - 10\eps \OPT = (1 - \tfrac{1}{e} - 10\eps)\OPT$. Noting that this is precisely the set returned by BuildSeedSet gives the desired bound on the approximation factor (rescaling $\epsilon$ by a constant factor). Thus the claim holds with probability at least $2/3 -1/n \geq 3/5$ (for $n \geq 20$).

Our condition $\tfrac{ck}{4(1+\epsilon)} \geq k+1$ is satisfied for $c = 4(1+\epsilon)(1 + \frac{1}{k})$.  We can therefore take this value of $c$ in our definition of $R$.

Finally, we argue that our algorithm can be implemented in the appropriate runtime.
The fact that BuildHypergraph executes in the required time follows from the explicit bound on its runtime.
For BuildSeedSet, we will maintain a list of vertices sorted by their degree in $\calh$; this will allow us to repeatedly select the maximum-degree node in constant time.  The initial sort takes time $O(n\log n)$.  We must bound the time needed to remove an edge from $\calh$ and correspondingly update the sorted list.  We will implement the sorted list as a doubly linked list of groups of vertices, where each group itself is implemented as a doubly linked list containing all vertices of a given degree (with only non-empty groups present).  Each edge of $\calh$ will maintain a list of pointers to its vertices.  When an edge is removed, the degree of each vertex in the edge decreases by $1$; we modify the list by shifting any decremented vertex to the preceding group (creating new groups and removing empty groups as necessary).  Removing an edge from $\calh$ and updating the sorted list therefore takes time proportional to the size of the edge.
%For BuildSeedSet, note that choosing the vertex of highest degree can be done via a priority queue in total time $O(k\log n) \subseteq O(n\log n)$.  Removing edges incident to a node $v$ in $\calh$ can be done in time proportional to the sum of the sizes of edges to remove (e.g.\ using inverted indices from hyperedges to its nodes).  
Since each edge in $\calh$ can be removed at most once over all iterations of BuildSeedSet, the total runtime is at most the sum of node degrees in $\calh$, which is at most $R$. % = O((m+n)\epsilon^{-3}\log(n))$.  
\end{proofof}

\subsection{Amplifying the Success Probability}
\label{sec:amplify.success}

Algorithm \ref{approx-exp-infl} returns a set of influence at least $(1-\frac{1}{e}-\epsilon)$ with probability at least $3/5$.  The failure probability is due to Lemma \ref{lem.hypergraph.size}: hypergraph $\calh$ may not have sufficiently many edges after $R$ steps have been taken by the simulation process in line 4 of the BuildHypergraph subprocedure.  However, note that this failure condition is detectable via repetition:  we can repeat Algorithm \ref{approx-exp-infl} multiple times, and use only the iteration that generates the most edges.  The success rate can then be improved by repeated invocation, up to a maximum of $1 - 1/n$ with $\log(n)$ repetitions (at which point the error probability due to Lemma \ref{lem.hypergraph.concentration} becomes dominant). 

We next note that, for any $\ell > 1$, the error bound in Lemma \ref{lem.hypergraph.concentration} can be improved to $\frac{1}{n^\ell}$, by increasing the value of $R$ by a factor of $\ell$, since this error derives from Chernoff bounds.  This would allow the success rate of the algorithm to be improved up to a maximum of $1 - \frac{1}{n^\ell}$ by further repeated invocation.  To summarize, the error rate of the algorithm can be improved to $1 - \frac{1}{n^\ell}$ for any $\ell$, at the cost of increasing the runtime of the algorithm by a factor of $\ell^2 \log(n)$.

\section{Approximate Influence Maximization in Sublinear Time}\label{section:sublinear}

We now describe a modified algorithm that provides a tradeoff between runtime and approximation quality.  For an an arbitrary $\beta < 1$, our algorithm will obtain an $O(\beta)$-approximation to the influence maximization problem, in time $O(\beta (n+m) k \log(n))$, with probability at least $3/5$. 
In Section \ref{sec:dynamic} we describe an implementation of this algorithm that supports termination after an arbitrary number of steps, rather than being given the value of $\beta$ in advance.  

Our algorithm is listed as Algorithm \ref{approx-exp-infl-rho}.  The intuition behind our construction is as follows.  We wish to find a set of nodes with high expected influence.  One approach would be to apply Algorithm \ref{approx-exp-infl} and simply impose a tighter constraint on the amount of time that can be used to construct hypergraph $\calh$.  This might correspond to reducing the value of parameter $R$ by, say, a factor of $\beta$.  Unfortunately, the precision of our sampling method does not always degrade gracefully with fewer samples: if $\beta$ is sufficiently small, we may not have enough data to guess at a maximum-influence node (even if we allow ourselves a factor of $\beta$ in the approximation ratio).  In these cases, the sampling approach fails to provide a good approximation.

However, as we will show, our sampling fails precisely because many of the edges in our hypergraph construction were large, and (with constant probability) this can occur only if many of the nodes that make up those edges have high influence.  
%While we cannot use these samples to directly estimate the influence of particular nodes, we know that a large fraction of the nodes touched as part of the experiment do, in fact, have high influence.  
In this case, we could proceed by selecting a node from the hypergraph at random, with probability proportional to its hypergraph degree.  We prove that this procedure is likely to return a node of very high influence precisely in settings where the original sampling approach would fail.

If $k > 1$, we can combine these two approaches by returning a union of vertices selected according to each procedure.  If $k=1$, we must choose which approach to apply.  However, in this case, there is a simple way to determine whether we have obtained enough samples that BuildSeedSet returns an acceptable solution: check whether the maximum degree in the hypergraph is sufficiently high.  

%However, for this to work we need to figure out which of the two approaches will succeed (since, in particular, the required bound on the hypergraph size is a function of $\OPT$, which is unknown).  We therefore apply both methods, then directly estimate the influence of each solution.
%%Unfortunately, directly estimating the influence of even one vertex is too costly.  
%We must do so carefully in order to keep our runtime small.  To this end, the TestInfluence procedure
%generates a rough sketch of the probability distribution over influence values by repeatedly growing depth-first trees of various sizes.  By balancing tree sizes with sampling precision, we are able to give an estimate of influence up to a maximum of $n/\beta$.  We then return whichever solution has the greater estimated influence.

%%%%%%%
% Removed arboricity content
%%%%%%%

%We note that, since our final estimation step proceeds by repeatedly exploring subgraphs of an input graph $\calg$, its runtime will be tied to the \emph{arboricity} of $\calg$.  Indeed, the arboricity is used to explicitly bound the time needed to construct a depth-first tree of a given size.
%
%\begin{definition}
%The \textit{arboricity number of an undirected graph $\calg$}, denoted as $a(\calg)$, is the minimum number of 
%spanning forests needed to cover all the edges of the graph.
%The arboricity of a directed, weighted graph is the arboricity of the undirected, unweighted version of the graph.
%\end{definition}

\begin{algorithm}
\caption{Runtime-Flexible Influence Maximization} \label{approx-exp-infl-rho}
%\begin{algorithmic}[1]
%\REQUIRE Approximation parameter $\beta > 1$, directed weighted graph $\calg$.
%	\STATE $R \leftarrow \frac{(24 \cdot 36)(n+m)\log^2(n)}{\beta}$
%	\STATE $\calh \leftarrow$ BuildHypergraph$(R)$
%	\STATE $S \leftarrow$ BuildSeedSet$(\calh,k)$
%	\STATE Choose $v \in V$ with probability proportional to degree in $\calh$
%	\STATE \textbf{if} TestInfluence$(S) >$ TestInfluence$(v)$ \textbf{then} \textbf{return} $S$
%	\STATE \textbf{else} \textbf{return} $\{v\}$
%\end{algorithmic}

%\textbf{Define:} $C = 48 \cdot 6^3$
%\vspace{-2mm}

\begin{algorithmic}[1]
%\REQUIRE Approximation parameter $\beta < 1$, directed weighted graph $\calg$.
\REQUIRE Directed edge-weighted graph $\calg$, runtime bound $R$, threshold $C$.
\vspace{1mm}
%	\STATE $R \leftarrow \frac{(2^{20})(n+m)\log(n)}{\beta}$
%	\STATE $R \leftarrow \beta \cdot 144C \cdot (n+m)\log(n)$
	\STATE $\calh \leftarrow$ BuildHypergraph$(R)$
	\STATE Choose $v \in V$ with probability proportional to degree in $\calh$
	\IF{$k > 1$}
		\STATE $S \leftarrow$ BuildSeedSet$(\calh,k-1)$
%	\STATE \textbf{if} TestInfluence$(S) >$ TestInfluence$(v)$ \textbf{then} \textbf{return} $S$
%	\STATE \textbf{else} \textbf{return} $\{v\}$
		\RETURN $S \cup \{v\}$
	\ELSE
		\STATE $S \leftarrow$ BuildSeedSet$(\calh,1)$
%		\STATE \textbf{if} $\max_u \{deg_{\calh}(u)\} > (48\cdot 6^3)\log n$ \textbf{then} \textbf{return} $S$
%		\STATE \textbf{if} $\max_u \{deg_{\calh}(u)\} > 2C\log n$ \textbf{then} \textbf{return} $S$
		\STATE \textbf{if} $\max_u \{deg_{\calh}(u)\} > C$ \textbf{then} \textbf{return} $S$
		\STATE \textbf{else} \textbf{return} $\{v\}$
	\ENDIF

\end{algorithmic}

%\vspace{2mm}
%\textbf{TestInfluence$(S)$:} \\
%%\textbf{Legend}: \\
%-- $\tau$: our guess at $\beta$ times the influence of $S$. \\
%-- $L$: our guess at the realized influence size that contributes most to the expected influence. \\
%\vspace{-5mm}
%\begin{algorithmic}[1]
%	\FOR{$\tau = n, n/2, n/4, \dotsc, 1$} 
%		\FOR{$L = n, n/2, n/4, \dotsc, \tau$}
%			\FOR{$j = 1, \dotsc, 32(L/\tau)\log(n)$}
%				\STATE  Simulate influence in $\calg$, starting from $S$, to a maximum of $L/\beta$ distinct nodes.
%		                     \STATE Let $T_j$ be the set of nodes discovered. 					
%				\STATE {\bf if} $\sum_j |T_j| > 96(n/\beta)\log(n)$ {\bf then} {\bf return} $\tau/\beta$
%%					\IF{$\sum_j |T_j| > 96(n/\beta)\log(n)$}
%%						\RETURN $u_i$
%%					\ENDIF
%			\ENDFOR
%			\STATE {\bf if} at least $256\log(n)$ sets $T_j$ satisfy $|T_j| \geq L/\beta$ {\bf then} {\bf return} $\tau/\beta$
%%				\IF{at least $256\log(n)$ trees $T_j$ satisfy $|T_j| \geq L/\beta$}
%%					\RETURN $u_i$
%%				\ENDIF
%		\ENDFOR
%	\ENDFOR
%	\RETURN $1$
%\end{algorithmic}

\end{algorithm}

\begin{theorem}
\label{thm.exp.infl.rho}
There exist constants $c_1, c_2$ such that, for any $\beta \in (0,1)$, if we set $R = c_1 c_2 \beta m k \log(n)$ and $C = c_2 \log(n)$, then Algorithm \ref{approx-exp-infl-rho} returns a set $S$ with $\E_{\calg}[I(S)] \geq \min\{ \frac{1}{4}, \beta \} \cdot \OPT$, with probability at least $3/5$.  The runtime of Algorithm \ref{approx-exp-infl-rho} is $\Theta(R) = \Theta(\beta m k  \log(n))$.
%For any $\beta < 1$, Algorithm \ref{approx-exp-infl-rho} returns, with probability of at least $3/5$, a node with expected influence at least $\text{min} \{ \frac{1}{4}, \beta \} \cdot OPT$.  Its runtime is 
%$O(\beta(n+m)\log(n))$.  
%$O(\frac{n \cdot a(\calg) \log^3(n)}{\beta})$.  
\end{theorem}

Our proof of Theorem \ref{thm.exp.infl.rho} proceeds via two cases, depending on whether 
%$\beta$ is large or small relative to $\OPT$.
$\calh$ has many or few edges as a function of $\OPT$.  
%The precise number of edges we require involves the constant $C = 48\cdot 6^3$ (from the definition of Algorithm \ref{approx-exp-infl-rho}), which we have not tried to optimize.  
We first show that, subject to 
%$\beta$ being large enough that $\calh$ has sufficiently many edges in expectation,
$\calh$ having many edges, 
set $S$ from line $5$ or $8$ (corresponding to $k>1$ and $k=1$, respectively) is likely to have high influence.  This follows the analysis from Theorem \ref{thm.alg1.correctness}. %almost exactly.

%% old argument isn't quite right: we shouldn't condition on m(H) being large or small, since this biases the
%% outcomes of the samples.  Instead, we should condition on whether beta is large or small.

%% If beta is large, then we get enough concentration to repeat the argument from Section 3

\begin{lemma}
\label{lem.influence1}
Suppose constant $c_2$ in the statement of Theorem \ref{thm.exp.infl.rho} is sufficiently large.  
%If $\beta \geq \tfrac{n}{m}\cdot\tfrac{\AVG}{\OPT}$,
%If $R \geq \tfrac{c_2 n k AVG \log(n)}{OPT}$, 
If $m(\calh) \geq \frac{c_2 n k \log(n)}{\OPT}$, 
then the set $S$ returned by Algorithm \ref{approx-exp-infl-rho} satisfies $\E_{\calg}[I(S)] \geq \frac{1}{4}\OPT$ with probability at least $1 - \frac{1}{n}$ (over randomness in $\calh$).
\end{lemma}
\begin{proof}
%Recall $C = 48 \cdot 6^3$.  
%Note that since $\beta \geq \tfrac{n}{m}\cdot\tfrac{\AVG}{\OPT}$, we have $R \geq \tfrac{\AVG}{\OPT} \cdot c_2 n k \log(n)$.
%Note that for sufficiently large constant $c_2$, our lower bound on $R$ is sufficiently high that Lemma \ref{lem.hypergraph.size} and Lemma \ref{lem.hypergraph.concentration} continue to hold with $\epsilon = 1/6$.
%Thus, following the reasoning of the proof of Theorem \ref{thm.alg1.correctness}, we can conclude that $m(\calh) \in \tfrac{R}{(1 \pm \epsilon)AVG}$ with high probability.  %, which in particular implies $m(\calh) \geq \frac{c_2 n k \log(n)}{\OPT}$.
%
Suppose $k=1$, so $S$ is as defined on line $7$.  
%Taking $c_2$ sufficiently large, 
If we now apply Lemma \ref{lem.hypergraph.concentration} to $J = \frac{c_2 n k \log(n)}{\OPT}$, taking $\epsilon = \tfrac{1}{6}$,
%$J = \tfrac{R}{(1 + \epsilon)AVG}$, 
followed by the analysis of BuildSeedSet$(\calh,k)$ from the proof of Theorem \ref{thm.alg1.correctness}, we get that $\E_{\calg}[I(S)] \geq \frac{1}{2}\OPT$, as required.

If  $k > 1$, applying Lemma \ref{lem.hypergraph.concentration} with $\epsilon = \frac{1}{6}$ yields instead that $\E_{\calg}[I(S)] \geq \frac{1}{2}\OPT_{k-1}$, where $\OPT_{k-1}$ is the maximum influence over sets of size at most $k-1$.  But now, by submodularity, $\frac{1}{2}\OPT_{k-1} \geq \frac{1}{2}(\frac{k-1}{k})\OPT \geq \frac{1}{4}\OPT$, as required.
\end{proof}

Note that $\beta$ does not appear explicitly in the statement of Lemma \ref{lem.influence1}.  The (implicit) role of $\beta$ in Lemma \ref{lem.influence1} is that as $\beta$ becomes small, Algorithm \ref{approx-exp-infl-rho} uses fewer steps to construct hypergraph $\calh$ and hence the condition of the lemma is less likely to be satisfied.
We next show that if $m(\calh)$ is small, then node $v$ from line $3$ is likely to have high influence.  This follows because, in a small number of edges, we do not expect to see many nodes with low influence.  Since we see a large number of nodes in total, we conclude that most of them must have high influence.
%Recall our assumption that $\beta > \log(n)$.

%% If beta is small, then it's likely that the number of edges observed is small.  Then even a single node is enough to get a beta approximation.
%% TODO --- This proof is not fixed yet

\begin{lemma}
\label{lem.influence2}
Let $c_1$ and $c_2$ be the constants from the statement of Theorem \ref{thm.exp.infl.rho} and suppose $c_1$ is sufficiently large.  If 
%$R < \tfrac{c_2 n k AVG \log(n)}{OPT}$
$m(\calh) < \frac{4 c_2 n k \log(n)}{OPT}$, 
then node $v$ (from line 3 of Algorithm \ref{approx-exp-infl-rho}) satisfies $\E_{\calg}[I(v)] \geq \beta\cdot\OPT$ with probability at least $2/3$ (over randomness in $\calh$).
%$m(\calh) < \frac{4 c_2 n\log(n)}{OPT}$ and that $c_1$ in the statement of Theorem \ref{thm.exp.infl.rho} is sufficiently large.  Then, with probability at least $2/3$, node $v$ (from line 3 of Algorithm \ref{approx-exp-infl-rho}) satisfies $\E_{\calg}[I(v)] \geq \beta\cdot\OPT$, with probability taken over randomness in $\calh$.
\end{lemma}

\begin{proof}
%Recall $C = (48 \cdot 6^3)$.  
Let random variable $X$ denote the number of times that a node with influence at most $\beta \cdot \OPT$ was added to a hyperedge of $\calh$.  Since $\calh$ has fewer than $\frac{4 c_2  n k \log(n)}{\OPT}$ edges, the expected value of $X$ is at most
\begin{eqnarray*}
E[X] &\leq& \frac{4 c_2 n k \log(n)}{\OPT} \sum_{u \in V}\frac{1}{n}\min\{\E_{\calg}[I(u)], \beta \cdot \OPT\} \\ 
&\leq& 4 c_2 \beta n k \log(n).
\end{eqnarray*}
Markov inequality then gives that $\Pr[X > 24 c_2 \beta n k \log(n)] < 1/6$.  Conditioning on this event, we have that at most $24 c_2 \beta n k \log(n)$ of the nodes touched by BuildHypergraph (counted with multiplicity) have influence less than $\beta \cdot \OPT$.  Note that drawing node $v$ with probability proportional to its degree is equivalent to drawing uniformly from all events in which a node is touched by BuildHypergraph.  Thus, since at least $c_1 c_2 \beta n k \log(n)$ nodes were touched in total, the probability that node $v$ from line $4$ has influence less than $\beta \cdot \OPT$ is at most $24/c_1$.  As long as $c_1$ is sufficiently large, this is at most $1/6$.  The union bound then allows us to conclude that $v$ has $\E[I(v)] \geq \beta \cdot \OPT$ with probability at least $1-(1/6 + 1/6) \geq 2/3$.
\end{proof}

For the case $k=1$, the algorithm chooses between returning $S$ and returning $\{v\}$, based on the maximum degree in $\calh$.  The following lemma motivates this choice.  The proof follows from an application of concentration bounds: if a node is present in $O(\log n)$ hyperedges, then with high probability we have obtained enough samples to accurately estimate its influence.

\begin{lemma}
\label{lem.k1}
If $k=1$ then the following is true with probability at least $1 - \frac{2}{n}$.  If $\max_u\{deg_\calh(u)\} > 2 c_2 \log n$ then 
%$\E_{\calg}[I(S)] \geq \frac{1}{4}OPT$.  
$m(\calh) > \frac{c_2 n \log(n)}{OPT}$.
Otherwise, $m(\calh) < \frac{4 c_2 n \log(n)}{OPT}$.
\end{lemma}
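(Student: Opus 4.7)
The plan is to prove the lemma by contrapositives, handling the randomness of $m(\calh)$ via the fact that, for each fixed node $u$, the degree $deg_\calh(u)$ is monotone non-decreasing in the number of hyperedges constructed. Accordingly, I will consider a hypothetical extension of BuildHypergraph that keeps adding hyperedges past the termination point, and let $deg_u^J$ denote the degree of $u$ after exactly $J$ hyperedges have been built. By Observation \ref{mainobs2}, $deg_u^J$ is the sum of $J$ i.i.d.\ $\mathrm{Bernoulli}(\E_\calg[I(u)]/n)$ random variables, so I can apply standard Chernoff bounds at deterministic values of $J$ and then translate back to the random $m(\calh)$ using monotonicity.

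Set $J_1 = \lceil Cn\log n / \OPT \rceil$ and $J_2 = \lfloor 4Cn\log n / \OPT \rfloor$, and consider the two events:
\begin{align*}
A &: \max_{u} deg_u^{J_1} \le 2C\log n, \\
B &: deg_{u^*}^{J_2} > 2C\log n, \text{ where } u^* \in \argmax_{u} \E_\calg[I(u)].
\end{align*}
For event $A$, note that since $k=1$ we have $\E_\calg[I(u)] \le \OPT$ for every $u$, so $\E[deg_u^{J_1}] \le C\log n$. A multiplicative Chernoff bound (with ratio $2$) yields $\Pr[deg_u^{J_1} > 2C\log n] \le e^{-C\log n/3}$, and a union bound over all $n$ nodes shows $\Pr[\neg A] \le n^{1 - C/3} \le 1/n$ since $C = 48\cdot 6^3$ is large. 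For event $B$, $\E[deg_{u^*}^{J_2}] \ge 4C\log n$, and the lower-tail Chernoff bound (with ratio $1/2$) gives $\Pr[\neg B] \le e^{-C\log n/2} \le 1/n$.

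Now assume both $A$ and $B$ hold, which occurs with probability at least $1 - 2/n$ by the union bound. If $\max_u deg_\calh(u) > 2C\log n$, then monotonicity together with $A$ forces $m(\calh) > J_1$, which yields the first conclusion. Conversely, if $\max_u deg_\calh(u) \le 2C\log n$, then in particular $deg_\calh(u^*) \le 2C\log n$, so by monotonicity and $B$ we must have $m(\calh) < J_2$, which yields the second conclusion. This completes the argument.

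The main obstacle is conceptual rather than technical: one must be careful about the dependence between the random stopping time $m(\calh)$ and the degrees, and the decoupling via the monotone coupling $deg_\calh(u) \le deg_u^J$ for $m(\calh) \le J$ (and the reverse inequality for $m(\calh) \ge J$) is the crucial observation. Once that is in place, the constant $C = 48\cdot 6^3$ is far larger than what Chernoff requires here, so the concentration arithmetic is a trivial bookkeeping step.
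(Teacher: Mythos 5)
Your proof is correct and follows essentially the same route as the paper's: both decouple the random stopping time $m(\calh)$ from the node degrees by applying Chernoff bounds at deterministic edge counts $J$ and then using monotonicity of $deg_u^J$ in $J$ to translate back to $m(\calh)$. The only cosmetic difference is that you fix global thresholds $J_1,J_2$ and isolate the maximum-influence node $u^*$ for the lower-tail direction, whereas the paper defines per-node thresholds on $\E[D_w^J]$ and union-bounds both tails over all nodes.
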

\begin{proof}
As in the proof of Lemma \ref{lem.hypergraph.concentration}, we will think of $\calh$ as being built incrementally edge by edge, and we will let $D_w^J$ denote the degree of vertex $w$ in $\calh$ after $J$ edges have been added.  Then, for any fixed $J$, $D_w^J$ is precisely the sum of $J$ Bernoulli random variables, each with expectation $\frac{1}{n}\E_{\calg}[I(w)]$, and hence $\E[D_w^J] = \frac{J}{n}\E_{\calg}[I(w)]$.
%Let $Y_w$ denote the random variable $deg_{\calh}(w)$, with randomness taken over the realization of $\calh$.  Then $Y_w$ is precisely the sum of $m(\calh)$ Bernoulli random variables, each with expectation $\frac{1}{n}\E_{\calg}[I(w)]$, and hence $\E[Y_w] = \frac{1}{n}m(\calh)\E_{\calg}[I(w)]$.  

Let $J_1$ be the smallest value such that $\E[D_w^{J_1}] > 4 c_2 \log n$.  Chernoff bounds %(Lemma \ref{lemma.multchernoff}) 
imply
\[ \Pr[ D_w^{J_1} \leq 2 c_2 \log n ] < e^{-2\log n} = 1/n^2. \]
Suppose that this event does not occur.  Then for any $J \geq J_1$, we have $D_w^J > D_w^{J_1} > 2 c_2 \log n$.  In particular, if $\E[D_w^{m(\calh)}] > 4 c_2 \log n$, we must have $m(\calh) \geq J_1$, and hence $D_w^{m(\calh)} > 2 c_2 \log n$.

Let $J_2$ be the largest value such that $\E[D_w^{J_2}] < c_2 \log n$.  Chernoff bounds %(Lemma \ref{lemma.multchernoff}) 
again imply that
\[ \Pr[ D_w^{J_2} \geq 2 c_2 \log n] < e^{-2\log n} = 1/n^2. \]
Suppose that this event does not occur.  Then for any $J \leq J_2$, we have $D_w^J < D_w^{J_2} < 2 c_2 \log n$.  In particular, for any $w$ such that $\E[D_w^{m(\calh)}] < c_2 \log n$, we must have $m(\calh) \leq J_2$, and hence $D_w^{m(\calh)} < 2 c_2 \log n$.

Taking the union bound over all $w$, we conclude that with probability at least $1 - 2/n$, only $w$ for which $\E[D_w^{m(\calh)}] > c_2 \log n$, and every $w$ with $\E[D_w^{m(\calh)}] > 4 c_2 \log n$, will have $D_w^{m(\calh)} \geq 2 c_2 \log n$.  We will condition on this event for the remainder of the proof.

Suppose that $\max_w D_w^{m(\calh)} < 2 c_2 \log n$.  Then we have that $\max_w \E[D_w^{m(\calh)}] < 4 c_2 \log n$.  Since $\max_w \E[D_w^{m(\calh)}] = \max_w \frac{1}{n}m(\calh)\E_{\calg}[I(w)] = \frac{1}{n}m(\calh)\cdot OPT$, we conclude $m(\calh) < \frac{4 c_2 n \log(n)}{OPT}$ as required.

Next suppose that $\max_w D_w^{m(\calh)} > 2 c_2 \log n$.  We then have that $\max_w \E[D_w^{m(\calh)}] > c_2 \log n$.  Since, again, $\max_w \E[D_w^{m(\calh)}] = \frac{1}{n}m(\calh)\cdot OPT$, we conclude $m(\calh) > \frac{c_2 n\log(n)}{OPT}$ as required.
\end{proof}

We are now ready to complete the proof of Theorem \ref{thm.exp.infl.rho}.

\begin{proofof}{Theorem \ref{thm.exp.infl.rho}}
%On the first iteration in which $\tau \leq OPT/\log(n)$, 
Lemma \ref{lem.influence1} and Lemma \ref{lem.influence2} imply that,
with probability at least $2/3-1/n^2 \geq 3/5$ (for $n \ge 5$), one of $S$ or $\{v\}$ has influence at least $\text{min} \{ \frac{1}{4}, \beta \} \cdot OPT$, and therefore $S \cup \{v\}$ does as well.  If $k > 1$ then we return $S \cup \{v\}$ and we are done.  Otherwise, Lemma \ref{lem.k1} implies that if we return set $S$ then the influence of $S$ is at least $OPT/4$ (by Lemma \ref{lem.influence1}), and if we return set $v$ then the expected influence of $v$ is at least $\beta \cdot OPT$ (by Lemma \ref{lem.influence2}).  Thus, in all cases, we return a set of influence at least $\text{min} \{ \frac{1}{4}, \beta \} \cdot OPT$.
%Since TestInfluence determines the influence of each set up to a potential under-valuation of a factor of $\log(n)$, the set for which TestInfluence returns the highest estimate must therefore have influence at least $\frac{\OPT\log(n)}{\beta}\cdot\frac{1}{\log(n)} = \OPT/\beta$.  
The required bound on the runtime of Algorithm~\ref{approx-exp-infl-rho} follows directly from the value of $R$, as in the proof of Theorem \ref{thm.alg1.correctness}.
\end{proofof}

\subsection{Dynamic Runtime}
\label{sec:dynamic}

Algorithm \ref{approx-exp-infl-rho} assumes that the desired approximation factor, $\beta$, is provided as a parameter to the problem.  We note that a slight modification to the algorithm removes the requirement that $\beta$ be specified in advance.  That is, we obtain an algorithm that can be terminated without warning, say after $O(\gamma \cdot (n+m) k \log(n))$ steps for some $\gamma \leq 1$, at which point it immediately returns a solution that is an $O(\gamma)$ approximation with probability at least $\frac{3}{5}$.  To achieve this, we execute Algorithm \ref{approx-exp-infl-rho} as though $\beta = 1$, but then modify BuildHypergraph so that, for each $i \geq 1$, we pause the creation of hypergraph $\calh$ after $2^i$ steps and complete the algorithm using the current hypergraph, which takes time at most $O(2^i)$.  Once this is done, we save the resulting solution and resume the creation of the hypergraph until the next power of $2$.  When the algorithm is terminated, we return the most recently-computed solution; this corresponds to a solution for a hypergraph built using at least half of the total steps taken by the algorithm at the time of termination.  Theorem \ref{thm.exp.infl.rho} then implies that this solution has approximation $O(\gamma)$ if termination occurs after $O(\gamma \cdot (n+m) k \log(n))$ steps.

\subsection{A Lower Bound}
We provide a lower bound on the time it takes for any algorithm, equipped with uniform node sampling, to compute a $\beta$-approximation for the maximum expected influence problem under the adjacency list network representation.  In particular, for any given budget $k$, at least $\Omega(\beta n)$ queries are required to obtain approximation factor $\beta$ with fixed probability.

\begin{theorem}
\label{thm.lowerbound}
Let $0 < \epsilon <\frac{1}{10e}$, $\beta \leq 1 $ be given.  Any randomized algorithm for the maximum influence problem that has runtime of $\frac{\beta(m+n)}{24\min\{k, 1/\beta\}}$ cannot return, with probability at least $1-\frac{1}{e}-\epsilon$, a set of nodes with approximation ratio better than $\beta$.
\end{theorem}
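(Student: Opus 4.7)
The plan is to apply Yao's minimax principle: it suffices to exhibit a distribution over input graphs on which every deterministic algorithm running in time at most $T = \beta(m+n)/(24\min\{k,1/\beta\})$ returns a strictly better-than-$\beta$ approximation with probability less than $1 - 1/e - \epsilon$. I will assume throughout that $n$ is large enough relative to $k$ and $1/\beta$ for the construction below to fit (in particular $n \geq k\lceil 2/\beta \rceil + k$ and $k^2/n = o(1)$).

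The hard instance is a random placement of $k$ disjoint heavy star gadgets in an otherwise edgeless $n$-vertex graph. Fix $h = \lceil 2/\beta \rceil$. Choose $k$ disjoint subsets $V_1, \dots, V_k$ of size $h+1$ uniformly at random from $V$, and within each $V_i$ uniformly designate one vertex $c_i$ as its center. Add deterministic (weight $1$) directed edges from $c_i$ to each of the $h$ other vertices in $V_i$; all remaining vertices are isolated. Then $m = kh$, and $\OPT = k(h+1)$ is attained uniquely by the seed set $\{c_1, \dots, c_k\}$.

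The key observation is that the centers are informationally hidden. Leaves and isolated vertices have no out-edges, and the out-edges from a center only enumerate the leaves of its own gadget, revealing no information about the locations of other centers. Hence the only query type that can discover a previously-unknown center is a uniform vertex sample, and by the symmetry of the random labelling, each such sample independently hits a center with probability $k/n$ conditional on any history consisting only of non-center observations. Since every query (sample or adjacency-list probe) takes $\Omega(1)$ time, the algorithm performs at most $T$ uniform samples within its budget, so the number $X$ of actual centers it has identified satisfies $\E[X] \leq Tk/n$.

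A direct calculation shows that any size-$k$ seed set containing exactly $t$ centers has influence at most $t(h+1) + (k-t) = th + k$, so its approximation ratio is at most $(th+k)/(k(h+1))$. For $h = \lceil 2/\beta \rceil$ this ratio exceeds $\beta$ only if $t > \beta k(1+\beta)/2$, which yields a threshold of $t^* = 1$ in Case A ($k \leq 1/\beta$) and $t^* = \Theta(\beta k)$ in Case B ($k > 1/\beta$). Substituting $T = \beta(m+n)/(24\min\{k,1/\beta\})$ gives $\E[X] \leq \beta/24 \leq t^*/12$ in Case A and $\E[X] \leq \beta^2 k/24 \leq t^*/12$ in Case B, and Markov's inequality then yields $\Pr[X \geq t^*] \leq 1/12$. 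Combined with an $O(k^2/n) = o(1)$ bound on the probability that blindly-guessed output nodes happen to coincide with unseen centers, the total probability that the algorithm returns a seed set with more than $t^*$ centers is less than $1 - 1/e - \epsilon$ (using $\epsilon < 1/(10e)$), yielding the desired bound. The main technical subtlety is verifying that no adaptive combination of traversals speeds up center discovery beyond pure uniform sampling; this reduces to noting that the posterior distribution of unseen centers given any query transcript remains uniform over the transcript-consistent labellings.
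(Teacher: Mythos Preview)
Your overall strategy (Yao's principle on a random placement of $k$ gadgets of size $\Theta(1/\beta)$ among singletons) is exactly the paper's approach, but your execution has a genuine gap in the hiding argument.

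You assert that ``leaves and isolated vertices have no out-edges'' and hence a center can only be discovered by sampling it directly, giving per-sample discovery probability $k/n$. This is not valid in the paper's query model: the model allows traversing \emph{all} edges incident to a previously accessed vertex (this is essential for the upper-bound algorithm, which simulates influence in the transpose graph by following in-links). A leaf $\ell$ in your star has the edge $(c_i,\ell)$ in its adjacency list, so sampling any of the $h+1$ star vertices and reading one incident edge reveals $c_i$. The correct per-sample discovery probability is therefore $k(h+1)/n = \Theta(k/(\beta n))$, a factor $\Theta(1/\beta)$ larger than what you use.

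This correction breaks your Case~B. With the true probability one gets $\E[X] \le T\cdot k(h{+}1)/n \le \beta k/3$, while your threshold is $t^* \ge \beta k/2$; Markov then only yields $\Pr[X \ge t^*] \le 2/3$, which exceeds $1 - 1/e - \epsilon$ and gives no contradiction. The paper handles this regime differently: it makes each gadget strongly connected (so every gadget vertex already has full influence, sidestepping the leaf-reveals-center issue), and in the large-$k$ case it applies a multiplicative Chernoff bound rather than Markov to push the success probability below $1 - 1/e - \epsilon$. Either fix---changing the gadget so any of its vertices is a valid seed, or replacing Markov by Chernoff after correcting the discovery probability---would repair your argument; as written, it does not go through.
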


%Due to lack of space, the proof is deferred to the appendix.
\begin{proof}
Note first that for a graph consisting of $n$ singletons, an algorithm must return at least $\beta k$ nodes to obtain an approximation ratio of $\beta$.  Doing so in at most $\beta^2 n/2$ queries requires that $2\beta k \leq \beta^2 n$, which implies $2k/\beta \leq n$.  We can therefore assume $2k/\beta \leq n$. %for the remainder of the proof.

The proof will use Yao's Minimax Principle for the performance of Las Vegas (LV) randomized algorithms on a family of inputs~\cite{Yao77}. The lemma states that the least expected cost of deterministic LV algorithms on a distribution over a family inputs is a lower bound on the expected cost of the optimal randomized LV algorithm over that family of inputs. 
Define the cost of the algorithm as $0$ if it returns a set nodes with approximation ratio better than $\beta$ and $1$ otherwise. As the cost of an algorithm equals its probability of failure, we can think of it as a LV algorithm.

Assume for notational simplicity that $\beta = 1/T$ where $T$ is an integer. We will build a family of lower bound graphs, one for each value of $n$ (beginning from $n = 1+T$); each graph will have $m \leq n$, so it will suffice to demonstrate a lower bound of $\frac{n}{12 T \min\{k,T\}}$.

We now consider the behavior of a deterministic algorithm $A$ with respect to the uniform distribution on the constructed family of inputs.  
For a given value $T$ the graph would be made from $k$ components of size $2T$ and $n-2kT$ singleton components (recall that $2kT = 2k/\beta \leq n$).  
If algorithm $A$ returns nodes from $\ell$ of the $k$ components of size $2T$, it achieves a total influence of $2\ell T+(k-\ell)$.  Thus, to attain approximation factor better than $\beta = \frac{1}{T}$, we must have $2\ell T+(k-\ell) \geq \frac{1}{T}2kT$, which implies $\ell \geq \frac{k}{2T - 1}$ for any $T > 1$.  

Suppose $k > 12T$.  The condition $\ell \geq \frac{k}{2T-1}$ implies that at least $\frac{k}{2T-1}$ of the large components must be queried by the algorithm, where each random query has probability $\frac{2kT}{n}$ of hitting a large component.  If the algorithm makes fewer than $\frac{n}{12T^2}$ queries, then the expected number of components hit is $\frac{n}{12T^2} \cdot \frac{2kT}{n} = \frac{k}{6T}$.  The Multiplicative Chernoff bound (Lemma \ref{lemma.multchernoff}, part 3) then imply that the probability hitting more than $\frac{k}{2T}$ components is no more than $e^{-\frac{k}{6T}\cdot 2 / 3} \leq \frac{1}{e^{4/3}} < 1-\frac{1}{e}-\epsilon$, a contradiction.

If $k \leq 12T$ then we need that $\ell \geq 1$, which occurs only if the algorithm queries at least one of the $kT$ vertices in the large components.  With $\frac{n}{2kT}$ queries, for $n$ large enough, this happens with probability smaller than $\frac{1}{e}- \epsilon$, a contradiction.

We conclude that, in all cases, at least $\frac{n}{12T\min\{k,T\}}$ queries are necessary to obtain approximation factor better than $\beta = \frac{1}{T}$ with probability at least $1-\frac{1}{e} -\epsilon$, as required.

By Yao's Minimax Principle this gives a lower bound of $\Omega(\frac{nd}{24T\min\{k, T\}})$ on the expected performance of any randomized algorithm, on at least one of the inputs.

Finally, the construction can be modified to apply to non-sparse networks.  For any $d \leq n$, we can augment our graph by overlaying a $d$-regular graph with exponentially small weight on each edge.  This does not significantly impact the influence of any set, but increases the time to decide if a node is in a large component by a factor of $O(d)$ (as edges must be traversed until one with non-exponentially-small weight is found).  Thus, for each $d \leq n$, we have a lower bound of $\frac{nd}{24T\min\{k, T\}}$ on the expected performance of $A$ on a distribution of networks with $m = nd$ edges.
\end{proof}

\begin{discussion}
The lower bound construction of Theorem \ref{thm.lowerbound} is tailored to the query model considered in this paper.  In particular, we do not assume that vertices are not sorted by degree, component size, etc.  However, the construction can be easily modified to be robust to various changes in the model, by (for example) adding edges with small weight so that the exhibited network $\calg$ becomes connected and/or regular.
% for some fixed $d$ (independent of $n$).
\end{discussion}

\subsection*{Acknowledgments}
%\section{Acknowledgments}
We thank Elchanan Mossel for helpful discussions.

\bibliographystyle{acmsmall}

\bibliography{sublinearInfluence}

\appendix
%

%\newpage
\section{Concentration Bounds}

For reference, we now provide the statement of the Chernoff bounds used throughout this paper.

\begin{lemma}
\label{lemma.multchernoff}
Let $X_i$ be $n$ i.i.d. Bernoulli random variables with expectation $\mu$ each. Define $X = \sum_{i=1}^{n}{X_i}$.
Then, 
\begin{itemize}
\item
For $0 <\lambda < 1$:$~\Pr[X < (1-\lambda)\mu n] < \exp(-\mu n\lambda^2/2)$.
\item
For $0 <\lambda < 1$:$~\Pr[X > (1+\lambda)\mu n] < \exp(-\mu n\lambda^2/4)$.
\item
For $\lambda \geq 1$:$~\Pr[X > (1+\lambda)\mu n] < \exp(- \mu n\lambda/3)$.      
\end{itemize}
\end{lemma}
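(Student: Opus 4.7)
The plan is to prove all three tail bounds via the standard moment-generating-function (Chernoff) technique: apply Markov's inequality to $e^{tX}$ (or $e^{-tX}$ for the lower tail), exploit independence to factor the MGF, and then optimize the parameter $t$.

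First I would compute the MGF of a single Bernoulli. For each $X_i$ with $\Pr[X_i = 1] = \mu$ we have $\E[e^{tX_i}] = 1 - \mu + \mu e^t$. Using the elementary inequality $1 + x \leq e^x$, this is at most $\exp(\mu(e^t - 1))$. By independence,
\[
\E[e^{tX}] = \prod_{i=1}^n \E[e^{tX_i}] \leq \exp\bigl(\mu n (e^t - 1)\bigr).
\]
This is the workhorse inequality for all three cases.

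For the lower tail (first bullet), I would use $\Pr[X < (1-\lambda)\mu n] = \Pr[e^{-tX} > e^{-t(1-\lambda)\mu n}]$ for $t > 0$, apply Markov, and obtain $\exp(\mu n(e^{-t} - 1 + t(1-\lambda)))$. Choosing $t = \lambda$ and using the standard Taylor estimate $e^{-\lambda} - 1 + \lambda \leq \lambda^2/2$ for $0 < \lambda < 1$ yields $\exp(-\mu n \lambda^2/2)$. For the upper tails, I would use $\Pr[X > (1+\lambda)\mu n] \leq \E[e^{tX}]\, e^{-t(1+\lambda)\mu n}$ for $t > 0$, giving the bound $\exp(\mu n (e^t - 1 - t(1+\lambda)))$. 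For the $0 < \lambda < 1$ case (second bullet), choose $t = \ln(1+\lambda)$ to get $\exp(\mu n (\lambda - (1+\lambda)\ln(1+\lambda)))$ and then use the inequality $(1+\lambda)\ln(1+\lambda) - \lambda \geq \lambda^2/4$ for $\lambda \in (0,1)$ (which follows from a short second-derivative comparison on $[0,1]$) to get $\exp(-\mu n\lambda^2/4)$. For the $\lambda \geq 1$ case (third bullet), the same expression with $t = \ln(1+\lambda)$ gives $\exp(-\mu n((1+\lambda)\ln(1+\lambda) - \lambda))$, and I would use $(1+\lambda)\ln(1+\lambda) - \lambda \geq \lambda/3$ for $\lambda \geq 1$ (verified by checking the inequality at $\lambda = 1$ and differentiating to show the left side grows at least as fast as $\lambda/3$) to conclude $\exp(-\mu n \lambda/3)$.

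The only mildly fiddly part will be verifying the two elementary analytic inequalities $(1+\lambda)\ln(1+\lambda) - \lambda \geq \lambda^2/4$ on $(0,1)$ and $\geq \lambda/3$ on $[1,\infty)$; both are standard and reduce to one-variable calculus (check values at the endpoints and compare derivatives). Everything else is a direct consequence of Markov's inequality and the factorization of the MGF, so no real obstacle beyond bookkeeping is expected.
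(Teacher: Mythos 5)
The paper states this lemma in its appendix purely for reference and gives no proof, so there is nothing to compare against; your argument is the standard exponential-moment (Chernoff) derivation and is correct as written. All three parameter choices and elementary inequalities check out: with $t=\lambda$ the lower-tail exponent becomes $\mu n(e^{-\lambda}-1+\lambda-\lambda^2)\leq -\mu n\lambda^2/2$, and the two claimed facts $(1+\lambda)\ln(1+\lambda)-\lambda\geq\lambda^2/4$ on $(0,1)$ and $\geq\lambda/3$ on $[1,\infty)$ both follow from the one-variable calculus you describe (the second derivative $\frac{1}{1+\lambda}-\frac{1}{2}$ is nonnegative on $[0,1]$ in the first case, and $\ln(1+\lambda)\geq\ln 2>\frac{1}{3}$ together with the base case $2\ln 2-1>\frac{1}{3}$ in the second).
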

%\begin{proof}
%The case of $0 < \lambda <1$ is standard and a proof can be found, for example, in chapter 1 of \cite{DP12}.
%For any $\lambda$, it is also shown therein that \[ \Pr[X > (1+\lambda)\mu n] \leq \left(\frac{e^{\lambda}}%{{(1+\lambda)}^{(1+\lambda)}} \right)^{\mu n}.\] Now for $\lambda \geq 1$,  \[ \frac{e^{\lambda}}{{(1+\lambda)}^{(1+\lambda)}} < \exp \left(-\frac{\lambda^2}{2+\lambda} \right) \leq  \exp \left(-\frac{\lambda}{3} \right),\] and the last part of the claim follows.
%\end{proof}

%\begin{lemma}(\text{Additive Chernoff Bound}) 
%\label{lemma.addchernoff}
%Let $X_i$ be $n$ i.i.d. Bernoulli random variables with expectation $\mu$ each. Define $X = \sum_{i=1}^{n}{X_i}$.
%Then, for $ \lambda > 0$, \\
%$Pr[X < \mu n - \lambda ] < \exp(-2 {\lambda^2}/n)$. \\
%$Pr[X > \mu n + \lambda] < \exp(- 2 {\lambda^2}/n)$.
%\end{lemma}

%\input{SublinearInfluence-AppendixLinear.tex}
%\input{SublinearInfluence-AppendixSub.tex}
%\input{SublinearInfluence-AppendixComp.tex}

\end{document}